\documentclass{article}


\usepackage{algorithm2e}
\usepackage{mdframed}
\usepackage{hyperref}

\usepackage{marvosym}
\usepackage[margin=1in]{geometry}
\newcommand{\envelope}{\raisebox{-.5pt}{\scalebox{1.45}{\Letter}}\kern-1.7pt}



\usepackage{amscd}
\usepackage{amsfonts}
\usepackage{amsmath}
\usepackage{amssymb}
\usepackage{amsthm}
\usepackage{calc}
\usepackage{xcolor}
\usepackage{epsfig}
\usepackage{graphicx}
\usepackage{hyperref}
\usepackage{import}
\usepackage{latexsym}
\usepackage{listings}
\usepackage{url}
\usepackage{verbatim}
\usepackage{framed}
\numberwithin{equation}{section}

\newcommand{\remove}[1]{}

\theoremstyle{remark}
\newtheorem{theorem}{Theorem}[section]

\newtheorem{definition}[theorem]{Definition}

\newtheorem{lemma}[theorem]{Lemma}

\def\cA{\ensuremath{{\cal A}}}   \def\cB{\ensuremath{{\cal B}}}   \def\cC{\ensuremath{{\cal C}}}   
         
\def\cI{\ensuremath{{\cal I}}}      \def\cK{\ensuremath{{\cal K}}}   \def\cL{\ensuremath{{\cal L}}}
      \def\cO{\ensuremath{{\cal O}}}   \def\cP{\ensuremath{{\cal P}}}

\DeclareMathOperator{\Bary}{Bary}

\DeclareMathOperator{\Car}{Car}

\DeclareMathOperator{\Ch}{Ch}

\DeclareMathOperator{\Div}{Div}

\DeclareMathOperator{\Lk}{Lk}
\DeclareMathOperator{\Ostar}{St^\circ}

\DeclareMathOperator{\Star}{St}

\DeclareMathOperator{\bary}{Bary}

\DeclareMathOperator{\conv}{conv}

\DeclareMathOperator{\inte}{Int}
\DeclareMathOperator{\im}{Im}

\DeclareMathOperator{\skel}{skel}

\DeclareMathOperator{\views}{views}

\newcommand{\bbR}{\ensuremath{\mathbb{R}}}

\newcommand{\var}[1]{\lstinline+#1+}

\newcommand{\set}[1]{\left\{ #1 \right\}}

\newcommand{\K}{\mathcal{K}}
\newcommand{\Ll}{\mathcal{L}}

\newcommand{\I}{\mathcal{I}}


\title{An Algorithmic Approach to the Asynchronous Computability Theorem}
\author{Vikram Saraph \and Maurice Herlihy \and Eli Gafni}

\begin{document}

\maketitle
\begin{abstract}

The asynchronous computability theorem (ACT) uses concepts from combinatorial
topology to characterize which tasks have wait-free solutions in read-write memory.
A task can be expressed as a relation between two chromatic simplicial complexes.
The theorem states that a task has a protocol (algorithm) if and only if there is a certain
chromatic simplicial map compatible with that relation.

While the original proof of the ACT relied on an involved combinatorial argument,
Borowsky and Gafni later proposed an alternative proof that relied on a algorithmic construction,
termed the ``convergence algorithm''.
The description of this algorithm was incomplete, and presented without proof.
In this paper, we give the first complete description, along with a proof of correctness.

\end{abstract}


\section{Introduction}
Herlihy and Shavit's~\cite{HerlihyS93,HerlihyS99} \emph{asynchronous computability
theorem} (ACT) characterizes which tasks are solvable in asynchronous
shared-memory systems.
Informally,
this characterization employs the language of combinatorial topology:
a task can be expressed as a relation between two ``colored'' simplicial
complexes.
The theorem states that a protocol (algorithm) exists if and only if there is a
``color-preserving'' simplicial map from one complex to the other compatible with
the task's relation.
This approach replaces the usual operational model,
where computations unfold in time,
with a static model in which all possible protocol interleavings are
captured in a single combinatorial structure.

The proof of the ACT contains one difficult step.
Using the classical \emph{simplicial approximation
theorem}~\cite[p.89]{Munkres84},
it is straightforward to construct a simplicial map having all desired
properties except that of being color-preserving.
To make this map color-preserving required a rather long construction
employing mechanisms from point-set topology, such as $\epsilon$-balls
and Cauchy sequences.

Borowsky and Gafni~\cite{BorowskyG97} later proposed an alternative
proof strategy for the ACT in which the essential chromatic property was
guaranteed by an algorithm, rather than by a combinatorial construction.
The description of this algorithm was sketchy, however, and no proof was provided.
In this paper, we give the first complete description of their
algorithm,
along with its first proof of correctness.

The paper is structured as follows.
In Section \ref{sec:topology}, we briefly introduce the
combinatorial model and the relevant topological machinery, and we give a short
exposition on the work by Borowsky and Gafni~\cite{BorowskyG97}. In Section 3,
we describe the convergence algorithm and a subroutine used by it. In Section 4,
we give a complete proof of correctness of the convergence algorithm. In Section 5,
we explain how the convergence algorithm can be applied to more general tasks.
In Section 6, we summarize some related work, and in Section 7, we conclude with
some final remarks.

\section{Combinatorial Topology}
\label{sec:topology}
This section reviews the basic notions of combinatorial topology that
will be used to describe shared-memory computation.
Our basic construct, called a simplicial complex,
can be defined in two complementary ways: combinatorial and geometric.
Sometimes it is convenient to use one view, sometimes the other,
and we will go back and forth as needed.

\subsection{Abstract Simplicial Complexes}
Given a finite set $V$,
and a~family $\cK$ of finite subsets of $V$,
we say that $\cK$ is an \emph{abstract simplicial complex} on $V$ if the following hold:
\begin{enumerate}
\item [(1)] if $X\in\cK$, and $Y\subseteq X$, then $Y\in\cK$;
\item [(2)] $\{v\}\in\cK$, for all $v\in V$.
\end{enumerate}
Each set in $\cK$ is called a \emph{simplex},
usually denoted by lower-case Greek letters $\sigma$ and $\tau$.
A subset of a simplex is a \emph{face} of that simplex.
The \emph{dimension} $\dim \sigma$ is
one less than the number of elements of $\sigma$, or $|\sigma| - 1$.
We use ``$n$-simplex'' as shorthand for ``$n$-dimensional simplex'',
and similarly for ``$n$-face''.
A simplex $\phi$ in $\cK$ is a \emph{facet} of $\cK$ if $\phi$ is not
contained in any other simplex.
A complex is \emph{pure} if all its facets have the same dimension.
The \emph{$n$-skeleton} of a complex $\cK$, $\skel^n \cK$,
is the complex formed by all simplexes of $\cK$ of dimension $n$ or less.
If $\cK$ and $\cL$ are complexes with $\cK \subseteq \cL$,
we say $\cK$ is a \emph{subcomplex} of $\cL$.

We use $\Delta^n$ to denote the complex consisting of a single $n$-simplex and
its faces.
A \emph{labeling} is a map $\lambda: V \rightarrow D$,
where $D$ is an arbitrary domain, such as the natural numbers.
A map $\phi$ carrying vertexes of $\cK$ to vertexes of $\cL$ is a
\emph{simplicial map} if it also carries simplexes of $\cK$ to simplexes of
$\cL$.
A simplicial map $\chi: \K \rightarrow \Ll$ is a \emph{coloring} if for each
simplex $\sigma$ of $\cK$, the vertexes of $\sigma$ are mapped to distinct vertexes.
Most complexes considered here are endowed with a coloring $\chi$,
and such complexes are called \emph{chromatic}.
A simplicial map $\phi: \cK \to \cL$ is \emph{color-preserving} if
$\chi(v) = \chi(\phi(v))$ for vertexes $v \in \cK$.

For complexes $\cK$ and $\cL$, a~\emph{carrier map}, written
\begin{equation*}
\Gamma: \cK \to 2^\cL,
\end{equation*}
maps each simplex $\sigma\in\cK$ to a~subcomplex $\Gamma(\sigma)$ of~$\cL$,
such that if $\tau \subseteq \sigma$,
then $\Gamma(\tau) \subseteq \Gamma(\sigma)$.
A simplicial map $\gamma$ is \emph{carried by} a carrier map $\Gamma$ if
$\gamma(\sigma) \subseteq \Gamma(\sigma)$ for every simplex in their domain.
In this case, $\gamma$ is said to be \emph{carrier-preserving}.

The \emph{star} of a simplex $\sigma \in \cC$, written $\Star(\sigma,\cC)$,
or $\Star(\sigma)$ when $\cC$ is clear from context,
is the complex that consist of every simplex $\tau$ which contains $\sigma$,
and every simplex contained in such $\tau$.

\subsection{Geometric Complexes}
In the alternative geometric view,
we embed a complex in a Euclidean space $\bbR^d$ of sufficiently high dimension.

Given a set $X = \set{x_0, \ldots, x_n}$ of points in $\bbR^d$,
their \emph{convex hull}, written $\conv X$,
is the set of points $y$ that can be expressed as
\begin{equation*}
y = \sum_{i=0}^n t_i \cdot x_i,
\end{equation*}
where the coefficients $t_i$ satisfy $0 \leq t_i \leq 1$,
and $\sum_{i=0}^n t_i = 1$.
The $t_i$ are called the \emph{barycentric coordinates} of $y$ with
respect to $X$.
The set $X$ is \emph{affinely independent} if no point in the set can
be expressed as a weighted sum of the others.

A (geometric) vertex is a point in $\bbR^d$,
and a (geometric) $n$-simplex is the convex hull of $(n+1)$
affinely-independent geometric vertexes.
A \emph{geometric simplicial complex} $\cK$ in $\bbR^d$ is a~collection
of of geometric simplexes, such that
\begin{enumerate} 
\item [(1)] any face of a~$\sigma\in\cK$ is also in $\cK$;
\item [(2)] for all $\sigma,\tau\in\cK$, their intersection
 $\sigma\cap\tau$ is a~face of each of them.
\end{enumerate}
We use $|\cK|$ to denote the point-set occupied by the geometric
complex $\cK$.

Given a geometric simplicial complex $\cK$, we can define the
underlying abstract simplicial complex $\cC(\cK)$ as follows: take the
union of all the sets of vertexes of the simplexes of $\cK$ as the
vertexes of $\cC(\cK)$, then for each simplex
$\sigma=\conv\{v_0,\dots,v_n\}$ of $\cK$ take the set
$\{v_0,\dots,v_n\}$ to be a simplex of $\cC(\cK)$.

In the opposite direction,
given an abstract simplicial complex $\cA$ with finitely
many vertexes, there exist many geometric simplicial complexes $\cK$,
such that $\cC(\cK)=\cA$.

The \emph{open star}, denoted $\Ostar(\sigma)$,
is the union of the interiors of the simplexes that contain $\sigma$:
\begin{equation*}
\Ostar(\sigma)=\bigcup_{\tau\supseteq\sigma}\inte\tau.
\end{equation*}
Note that $\Ostar(\sigma)$ is not an~abstract or geometric simplicial
complex,
but just a open point-set in $\cC$.

\subsection{Subdivisions}
\begin{figure}
\centerline{\includegraphics[width=\hsize]{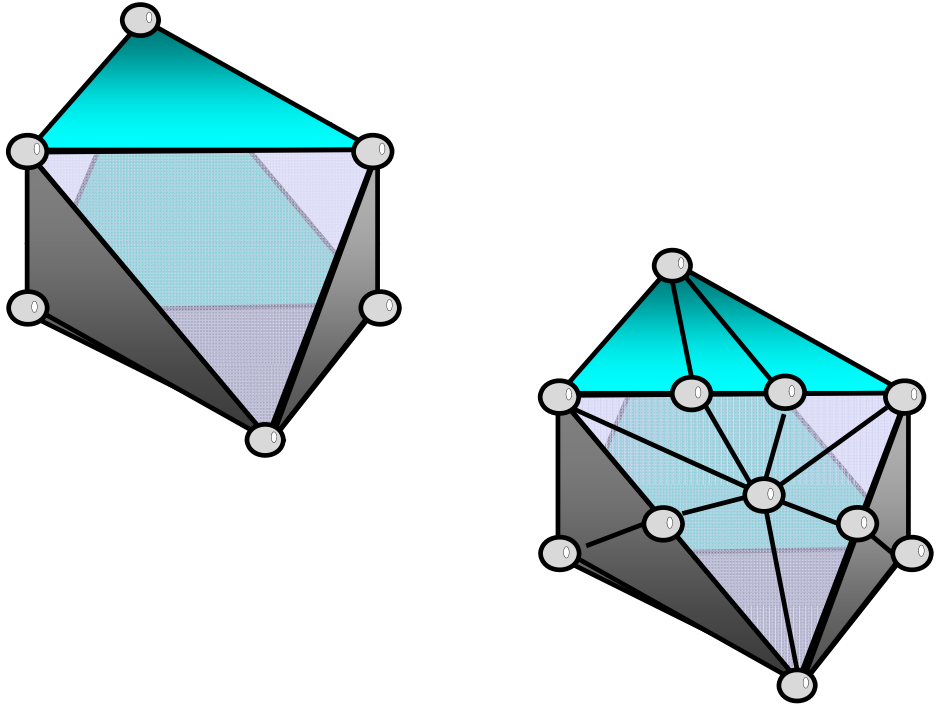}}
\caption{A geometric complex and a subdivision of it.}
\label{figure:ch03:subdivision}
\end{figure}

If $\cA$ and $\cB$ are geometric complexes,
a continuous map $f: |\cA| \to |\cB|$ is \emph{carried by}
a carrier map $\Phi: \cA \to 2^{\cB}$ if, for every simplex $\sigma \in \cA$,
$f(\sigma) \subseteq |\Phi(\sigma)|$.

Informally, a \emph{subdivision} of a complex $\cK$ is
constructed by ``dividing'' the simplexes of $\cK$ into smaller
simplexes, to obtain another complex $\cL$. 
Subdivisions can be defined for both geometric and abstract complexes.

\begin{definition}
\label{def:ch03:subdivision}
\index{subdivision}
A geometric complex $\cB$ is called a \emph{subdivision} of a geometric
complex $\cA$ if the following two conditions are satisfied:
\begin{enumerate}
\item[(1)] $|\cA|=|\cB|$; 
\item[(2)] each simplex of $\cA$ is the union of finitely many simplexes of~$\cB$.
\end{enumerate}
\end{definition}
Figure~\ref{figure:ch03:subdivision} shows a geometric complex and
a~subdivision of that complex.

Subdivisions can be defined for abstract simplicial complexes as well. 
\begin{definition}\label{def:ch03:Subd}
Let $\cA$ and $\cB$ be abstract simplicial complexes. We say that
$\cB$ \emph{subdivides} the complex $\cA$ if there exists
a~homeomorphism $h:|\cA|\rightarrow|\cB|$ and a~carrier map
$\Phi:\cA\rightarrow 2^{\cB}$, such that for every simplex
$\sigma\in\cA$, the restriction $h|_{|\sigma|}$ is a~homeomorphism
between $|\sigma|$ and $|\Phi(\sigma)|$.
\end{definition}

A subdivision $\Div(\cK)$ is \emph{chromatic},
if $\Div(\cK)$ is chromatic,
and for every simplex $\sigma \in \cK$,
$\Div(\sigma)$ and $\sigma$ have the same colors.

Let $\Div \cK$ be a subdivision.
The \emph{carrier} of a simplex $\sigma$, $\Car(\sigma)$,
is the smallest simplex $\kappa \in \Div \cK$ such that $\sigma \in \Div(\cK)$.

\subsubsection{Barycentric Subdivision}
\label{section:ch03:barycentric}
In classical combinatorial topology,
the following barycentric subdivision is perhaps the most widely used.
\begin{definition}\label{def:ch03:barySubd}
\index{barycentric subdivision}
\index{subdivision!barycentric}
Let $\cK$ be an~abstract simplicial complex.
Its \emph{barycentric} subdivision $\bary\cK$ is the abstract simplicial
complex whose vertexes are the non-empty simplexes of $\cK$.
A~$(k+1)$-tuple $(\sigma_0,\dots,\sigma_k)$ is a~simplex of $\bary\cK$ if and
only if  the tuple can be indexed so that
$\sigma_0 \subset \cdots \subset \sigma_k$.
\end{definition}

\subsubsection{Standard Chromatic Subdivision}
\begin{figure}
\centerline{\includegraphics[width=\hsize]{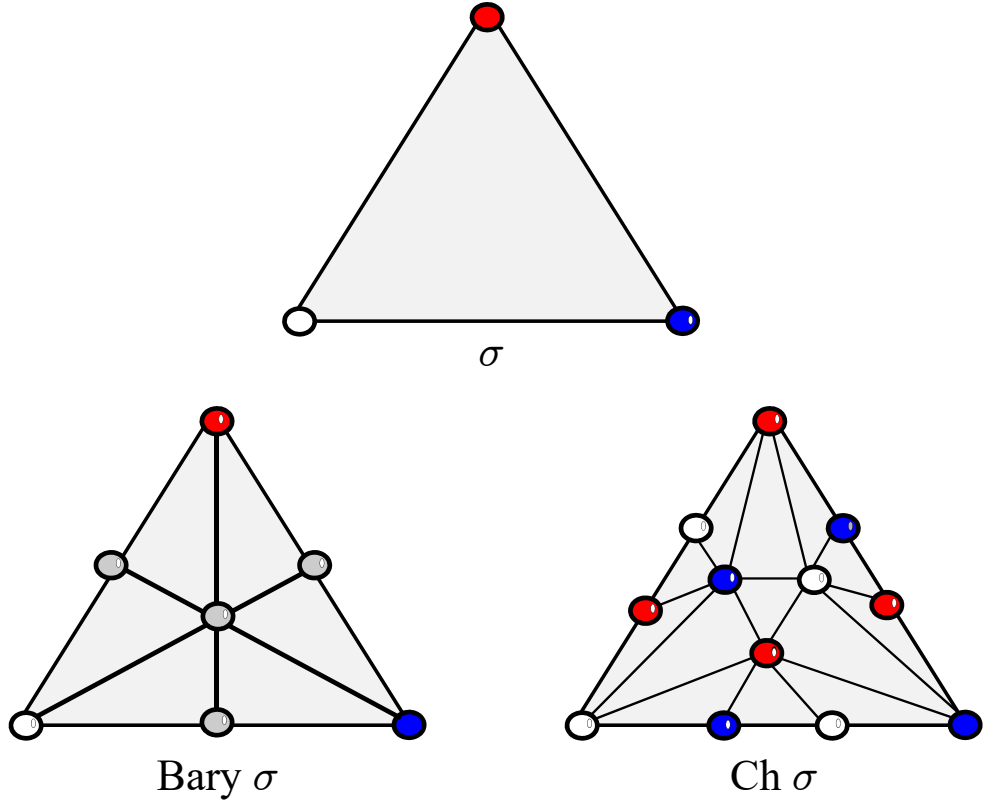}}
\caption{A simplex $\sigma$ (top),
the barycentric subdivision $\bary \sigma$ (bottom left),
and the standard chromatic subdivision $\Ch \sigma$ (bottom right).}
\label{figure:ch03:subdivisions}
\end{figure}

The \emph{standard chromatic subdivision} $\Ch(\cK)$
is the chromatic analog to the barycentric subdivision.
(See Figure~\ref{figure:ch03:subdivisions}.)

\begin{definition}\label{def:ch03:chromSubd}
Let $(\cA,\chi)$ be a~chromatic abstract simplicial complex.
Its \emph{standard chromatic} subdivision $\Ch(\cA)$ is the abstract simplicial
complex whose vertexes have the form $(i,\sigma_i)$,
where $i \in \{0, ...,n\}$, $\sigma_i$ is a non-empty face of $\sigma$,
and $i \in \chi(\sigma_i)$.
A~$(k+1)$-tuple $(\sigma_0,\dots,\sigma_k)$ is a~simplex of $\Ch(\cA)$ if and
only if
\begin{itemize}
\item
the tuple can be indexed so that
$\sigma_0\subseteq \cdots \subseteq \sigma_k$.
\item
for $0 \leq i,j \leq n$,
if $i \in \chi(\sigma_j)$ then $\sigma_i \subseteq \sigma_j$. 
\end{itemize}
Finally, to make the subdivision chromatic,
we define the coloring $\chi: \Ch(\cK)$ to be $\chi(i,\sigma) = i$.
\end{definition}
This subdivision extends to complexes in the obvious way,
and it can be \emph{iterated} to produce subdivisions $\Ch^N(\sigma)$,
for $N > 0$.


\subsection{Links and Connectivity}

Subdivided simplexes exhibit an important property called \emph{link-connectivity}.
We provide the necessary definitions below.

\begin{definition}
The \emph{link} of a $\sigma$ in complex $\cK$,
denoted $\Lk(\sigma, \cK)$,
is the subcomplex of $\cK$ consisting of all simplexes $\tau$ disjoint
from $\sigma$ such that $\tau \cup \sigma \in \cK$.  
\end{definition}


One can think of the link of $\sigma$ as a simplicial neighborhood
that encompasses $\sigma$. Next, we present the concept of \emph{connectivity}
as taken from algebraic topology.

\begin{definition}
Let $S^k$ denote the $k$-dimensional sphere.
A~complex $\cK$ is \emph{$k$-connected} if,
for all $0\leq m\leq k$, any continuous map
$f:~S^m~\to~|\cK|$ can be extended to a continuous $F:~D^{m+1}~\to~|\cK|$, where
the sphere $S^m$ is the boundary of the disk $D^{m+1}$.
\end{definition}

One way to think about this property is that any map $f$ of the $k$-sphere
that cannot be ``filled in'' represents a $k$-dimensional ``hole'' in the complex.  

\begin{definition}
A pure $n$-dimensional complex $\cK$ is \emph{link-connected} if for all $\sigma \in \cK$, $\Lk(\sigma, \cK)$
is $(n - \dim(\sigma) - 2)$-connected.
\end{definition}
Informally, link-connectivity ensures that a complex cannot be ``pinched'' too
thinly. All subdivided simplexes are link-connected, which is an important property for proving the main theorem.

\subsection{Useful Theorems}
We will need the following \emph{pasting lemma} from point-set topology.
\begin{lemma}
\label{lemma:pasting}
Let $X_1$ and $X_2$ be closed sets of a common space, with continuous $f_i : X_i \rightarrow Y$ that coincide on the intersection $X_1 \cap X_2$. Then the
function $f : X_1 \cup X_2 \rightarrow Y$, defined in terms of the $f_i$, is a continuous function.
\end{lemma}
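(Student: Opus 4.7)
The plan is to prove continuity of $f$ via the closed-set characterization: $f$ is continuous if and only if the preimage of every closed set in $Y$ is closed in $X_1 \cup X_2$. So I would start by fixing an arbitrary closed $C \subseteq Y$ and aim to show that $f^{-1}(C)$ is closed in $X_1 \cup X_2$ (equipped with the subspace topology inherited from the common ambient space).

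The key algebraic observation is that $f$ agrees with $f_i$ on $X_i$, so
\begin{equation*}
f^{-1}(C) = f_1^{-1}(C) \cup f_2^{-1}(C).
\end{equation*}
Since each $f_i$ is continuous on $X_i$, the preimage $f_i^{-1}(C)$ is closed in $X_i$. This means $f_i^{-1}(C) = X_i \cap F_i$ for some $F_i$ closed in the ambient space. Because $X_i$ itself is closed in the ambient space by hypothesis, $f_i^{-1}(C)$ is closed in the ambient space as well, and hence closed in $X_1 \cup X_2$. A finite union of closed sets is closed, so $f^{-1}(C)$ is closed in $X_1 \cup X_2$, completing the argument.

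The only subtlety worth being careful about is the interplay between the topology on $X_i$, the topology on $X_1 \cup X_2$, and the ambient topology; in particular I would spell out that the hypothesis that each $X_i$ is closed in the ambient space is exactly what lets us promote ``closed in $X_i$'' to ``closed in $X_1 \cup X_2$.'' The agreement of $f_1$ and $f_2$ on $X_1 \cap X_2$ is needed only to guarantee that $f$ is well-defined, not for the topological computation. Because this is entirely standard, I do not anticipate a genuine obstacle; the proof is essentially a one-line set-theoretic identity wrapped in the closed-set criterion for continuity.
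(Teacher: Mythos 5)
Your argument is the standard and correct proof of the pasting lemma: the identity $f^{-1}(C) = f_1^{-1}(C) \cup f_2^{-1}(C)$, the promotion of ``closed in $X_i$'' to ``closed in the ambient space'' via the hypothesis that each $X_i$ is closed, and the closed-set criterion for continuity are all exactly right. Note that the paper itself offers no proof of this lemma --- it is imported as a known fact from point-set topology --- so there is nothing to compare against; your write-up would serve as a perfectly adequate proof if one were wanted.
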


We will also need the following version of the classical
\emph{simplicial approximation theorem}.
\begin{definition}
\label{def:ch03:simplicial-approximation}
Let $\cA$ and $\cB$ be abstract simplicial complexes, let
$f:~|\cA|~\to~|\cB|$ be a~continuous map, and let
$\varphi:~\cA~\to~\cB$ be a~simplicial map. The map $\varphi$ is
called a~\emph{simplicial approximation} to~$f$, if for every simplex
$\alpha$ in $\cA$ we have
\begin{equation}
\label{eq:defsimpappr}
f(\inte|\alpha|) \subseteq \bigcap_{a \in \alpha} \Ostar(\varphi(a))
=\Ostar(\varphi(\alpha)),
\end{equation}
where $\inte|\alpha|$ denotes the interior of $|\alpha|$.
\end{definition}

\begin{theorem}
\label{theorem:ch03:approx}
Let $\cA$ and $\cB$ be simplicial complexes.
Given a~continuous map $f:|\cA|~\to~|\cB|$, there is an
$N > 0$ such that $f$ has a~simplicial approximation
$\varphi:~\Ch^N(\cA)~\to~\cB$.
\end{theorem}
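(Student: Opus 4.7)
The plan is to follow the classical compactness-and-mesh argument for simplicial approximation, but with the iterated standard chromatic subdivision $\Ch^N$ in place of the usual iterated barycentric subdivision.

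First I would form the open cover $\mathcal{U} = \{\Ostar(b) : b \text{ a vertex of } \cB\}$ of $|\cB|$. Pulling back under $f$ gives an open cover $\{f^{-1}(\Ostar(b))\}$ of $|\cA|$. Since the vertex set of $\cA$ is finite, $|\cA|$ is compact, so by the Lebesgue number lemma there exists $\delta>0$ such that every subset of $|\cA|$ of diameter less than $\delta$ is contained in some $f^{-1}(\Ostar(b))$.

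Next I would invoke the fact that $\mesh(\Ch^N(\cA))\to 0$ as $N\to\infty$, i.e.\ the diameter of the largest simplex of $\Ch^N(\cA)$ tends to zero. Choose $N$ large enough that $2\cdot\mesh(\Ch^N(\cA))<\delta$; then for every vertex $a\in\Ch^N(\cA)$, the closed star $\Star(a)$ has diameter less than $\delta$, so there is a vertex $b\in\cB$ with $f(\Star(a))\subseteq\Ostar(b)$. Define $\varphi(a):=b$ by choosing one such $b$ for each $a$.

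It remains to verify that $\varphi$ is a simplicial map and a simplicial approximation to $f$. Suppose $\alpha = \{a_0,\dots,a_k\}$ is a simplex of $\Ch^N(\cA)$. Then $\inte|\alpha|\subseteq \bigcap_i \Star(a_i)$, and therefore
\begin{equation*}
f(\inte|\alpha|) \;\subseteq\; \bigcap_{i=0}^k f(\Star(a_i)) \;\subseteq\; \bigcap_{i=0}^k \Ostar(\varphi(a_i)).
\end{equation*}
Since $\inte|\alpha|$ is nonempty, the intersection on the right is nonempty, which by the combinatorial characterization of open stars means that $\{\varphi(a_0),\dots,\varphi(a_k)\}$ spans a simplex of $\cB$. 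Hence $\varphi$ is simplicial, and the displayed inclusion is exactly condition (\ref{eq:defsimpappr}) in Definition~\ref{def:ch03:simplicial-approximation}, so $\varphi$ is a simplicial approximation to $f$.

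The main obstacle is the one place where this deviates from the textbook proof: one must know that iterated standard chromatic subdivision genuinely shrinks mesh, i.e.\ $\mesh(\Ch^N(\cA))\to 0$. Unlike the barycentric case, where a clean ratio $n/(n+1)$ bounds the shrinkage per iteration, for $\Ch$ one needs either a direct geometric realization argument or a reference to the known fact that $\Ch$ is a subdivision in the sense of Definition~\ref{def:ch03:subdivision} together with a uniform bound on how the longest edge shrinks under one application of $\Ch$. Once that ingredient is in hand, everything else is the standard Lebesgue-number/open-star bookkeeping above.
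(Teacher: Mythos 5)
The paper does not prove this statement at all: it is quoted as a known version of the classical simplicial approximation theorem (with a pointer to Munkres for the barycentric case and a remark that the two subdivisions are interchangeable here), so there is no internal proof to compare yours against. Your argument is the standard one and is sound as far as it goes: the open-star cover of $|\cB|$, the Lebesgue number for the pulled-back cover of the compact polyhedron $|\cA|$, the choice of $N$ so that closed vertex stars have diameter below the Lebesgue number, and the verification that nonemptiness of $\bigcap_i \Ostar(\varphi(a_i))$ forces $\varphi(\alpha)$ to be a simplex are all correct, and the displayed inclusion is precisely condition~(\ref{eq:defsimpappr}). You have also correctly isolated the only ingredient that is not verbatim from the textbook proof, namely that $\mesh(\Ch^N(\cA))\to 0$ for a suitable geometric realization of the standard chromatic subdivision. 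That fact is true but not free: the paper's Definition~\ref{def:ch03:chromSubd} is purely combinatorial, so one must first fix a geometric realization of $\Ch$ as a genuine subdivision in the sense of Definition~\ref{def:ch03:subdivision} (placing the new vertex $(i,\sigma_i)$ near the barycenter of $\sigma_i$, perturbed toward color $i$) and then establish a uniform contraction ratio strictly less than $1$ per iteration, analogous to the $n/(n+1)$ bound for $\bary$. This is proved, for instance, in the Herlihy--Kozlov--Rajsbaum text that the paper cites. If you either carry out that estimate or cite it explicitly, your proof is complete; as written it is a correct skeleton with one honestly acknowledged external lemma.
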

This theorem remains true if we replace the standard chromatic
subdivision $\Ch(\cdot)$ with the barycentric subdivision $\Bary(\cdot)$.

\section{Read-Write Memory}
\label{sec:computation}
In this section we describe and motivate our model of computation.
A \emph{distributed system} is a collection of sequential computing entities,
called \emph{processes}, that cooperate to solve a problem, called a \emph{task}.
The processes communicate by reading and writing a shared memory.
Each process runs a program that defines how and when it communicates
with other processes.
Collectively these programs define a concurrent algorithm, or \emph{protocol}.

The theory of distributed computing is largely about what is computable
in the presence of timing uncertainty and failures.
Here, we assume processes are subject to \emph{crash failures},
in which a faulty process simply halts and falls silent.
We focus on \emph{wait-free} protocols that solve particular
tasks when any proper subset of the of processes may fail.
We adopt an \emph{asynchronous} timing model,
where processes run at arbitrary, unpredictable speeds, and
there is no bound on process step time.
Note that a failed process cannot be distinguished from a slow process.

\subsection{Processes}
As stated, a process is an automaton.
Each process has a unique \emph{name}.
In a computation,
it starts with an input value,
takes a finite number of steps,
and halts with an output value.
The task specification defines which values can be assigned as inputs,
and which can be accepted as outputs.

A process state is modeled as a vertex,
labeled with both that process's name and its state.
A system state is a simplex,
where each vertex represents a process and its state,
and the vertexes represent mutually compatible states.
Such is simplex colored by process names, and labeled with process states.
An initial system state is represented as a simplex whose set of
vertexes represent the simultaneous states of distinct processes.
The set of all possible states forms a chromatic complex,
colored by process names, and labeled with process states.
Henceforth, we speak of a vertex's \emph{color} as shorthand for the
name of the corresponding process.

\subsection{Tasks}
To reduce computation to its simplest form,
we consider a fundamental unit of computation called a~\emph{task}.
An input to a task is distributed:
only part of the input is given to each process.
The output from a task is also distributed:
only part of the output is computed by each process.
The task specification states which outputs can be produced in response to each
inputs.

A \emph{task} is a triple $(\cI, \cO, \Gamma)$,
where $\cI$ is the \emph{input complex\/} that defines all possible initial configurations,
$\cO$ is the \emph{output complex} that defines all possible final configurations,
and $\Gamma$ is a carrier map carrying each $m$-simplex of $\cI$ 
to a subcomplex of $m$-simplexes of $\cO$, for $0 \leq m \leq n$.
$\Gamma$ has the following interpretation:
for each $\sigma \in \cI$,
if the $(\dim \sigma +1)$ processes in $\sigma$ start with the designated
input values, 
and the remaining $n- \dim \sigma$ processes fail without taking any steps,
then each simplex in $\Gamma(\sigma)$ corresponds to a legal final state
of the non-faulty processes.

A task has the following interpretation.
For each simplex $\sigma \in \cI$,
if the $(\dim \sigma +1)$ processes in $\sigma$ each starts on the
unique vertex colored with its name,
taking as input that vertex's labeled value,
and the remaining $n- \dim \sigma$ processes fail without taking any steps,
then each simplex in $\Gamma(\sigma)$ corresponds to a legal final state
of the non-faulty processes,
where each process halts on the unique vertex colored with its name,
taking as output that vertex's labeled value.

The most common task considered in this paper is a \emph{convergence} task.
We are given a input complex $\cI$ colored by process names,
and a chromatic subdivision of $\cI$.
In the task $(\cI, \Div(\cI), \Div)$,
the $n+1$ processes start on vertexes of an $n$-simplex $\sigma$ of $\cI$,
where each process's input vertex is colored by that process's name.
The processes halt on a single simplex of $\Div(\sigma)$,
and each process halts on an vertex is colored by that process's name.

In some circumstances,
we relax the requirement that each processes halt on an output vertex
labeled with its own name.
A \emph{colorless} task is also defined by a triple $(\cI,\cO,\Gamma)$,
but there is no requirement that the decision map be color-preserving.

\subsection{Protocols}
A \emph{protocol} is a concurrent algorithm to solve a task:
initially each process knows its own part of the input, but not the others'.
Each process communicates with the others by reading and writing a
shared memory,
and eventually halts with its own output value.
Collectively, the individual output values form the task's output.

Here, we are concerned with protocols where processes communicate by
reading and writing shared memory.
Without loss of generality,
we consider \emph{full-information} protocols in which each process repeatedly
writes its current state to the shared memory,
and constructs its new state by taking a snapshot of (instantaneously
reading) the states written by other processes.
After a finite number of such communication rounds,
each process applies a \emph{decision map} to its state to choose an output value.

All possible executions of a protocol can be modeled as a
chromatic simplicial complex.
We think of the protocol executions as a chromatic carrier map $\Xi$ 
from the input complex $\cI$ to $\cP$, the chromatic \emph{protocol complex.}
The carrier map $\Xi$ carries each input simplex $\sigma\in\cI$
to a subcomplex $\Xi(\sigma)$ of the protocol complex.
$\Xi$ carries each $\sigma\in\cI$ to the executions in which the
processes in $\sigma$ do not hear from any other process.

The protocol complex is related to the output complex by the decision map $\delta$, 
a chromatic simplicial map that sends each vertex $v$ in the protocol complex 
to a vertex $w$ in the output complex,
colored with the same process name.
The decision value labeling $w$ is the value which the corresponding
process takes as its output value. 
A protocol with carrier map $\Xi$
solves the task if the carrier map obtained by composing $\delta$
with $\Xi$ is carried by $\Delta$. 

\subsection{Read-Write Memory}

%
%
%

In the most natural shared-memory model,
the processes share an ordered sequence of \emph{registers},
memory units that can hold values from an arbitrary domain.
A process can \emph{read} a register,
returning its current contents,
or it can \emph{write} a new value to that register,
obliterating that register's prior value.

There are many variations of the asynchronous read-write model,
all computationally equivalent in the sense that one model can
simulate the others with overhead polynomial in the number of
processes.
Here, we will use two distinct alternatives.
In the \emph{snapshot} model,
each process can write atomically to a single register,
but it can atomically read any set of registers,
an operation called a \emph{snapshot}.

It is also sometimes convenient to use a variation known as the
\emph{immediate snapshot} model~\cite{BorowskyG93,SaksZ93}
An \emph{immediate snapshot} takes place in two contiguous steps.
In the first step, a process writes its view to a word in memory,
possibly concurrently with other processes.
In the very next step, it takes a snapshot of some or all of the memory,
possibly concurrently with other processes.
It is important to understand that that in an immediate snapshot,
the snapshot step takes place \emph{immediately after} the write step.

Although, modern multicores do not support either the snapshot or the
immediate snapshot model directly,
either model can be simulated by more conventional models,
and vice-versa \cite[Ch. 14]{HerlihyKR2013}.
The snapshot model is convenient for expressing specific algorithms,
while the immediate snapshot model is convenient because
the protocol complex generated by an $N$-round immediate snapshot
execution is the subdivision $\Ch^N(\cI)$ of the input complex
\cite[Ch. 4]{HerlihyKR2013}.
(Using individual reads and writes,
the protocol complexes are not subdivisions,
although the can be retracted to subdivisions.)

\section{The Asynchronous Computability Theorem}
\label{sec:act}
The ACT states that a task $T = (\cI, \cO, \Gamma)$ has a wait-free protocol in
read-write memory exactly when there is a chromatic, simplicial map from
a chromatic subdivision of $\cI$ to $\cO$ carried by $\Gamma$.
\begin{theorem}
\label{theorem:ACT}
A task $(\cI, \cO, \Gamma)$ has a wait-free read-write protocol if and
only if there is a chromatic subdivision $\Div(\I)$ and a
color-preserving simplicial map 
\begin{equation*}
\mu: \Div(\I) \to \cO
\end{equation*}
carried by $\Gamma$.
\end{theorem}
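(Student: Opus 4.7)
The plan is to establish the biconditional by handling the two directions separately. The forward direction (protocol implies map) is essentially a bookkeeping exercise on top of the observation that an immediate-snapshot protocol complex is already a chromatic subdivision of the input complex. The backward direction (map implies protocol) is where the real difficulty lies, and it is what the convergence algorithm of Sections 3 and 4 is designed to handle.

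For the forward direction, suppose a wait-free read-write protocol $P$ for $(\cI, \cO, \Gamma)$ exists. Up to standard simulations one may take $P$ to be a full-information protocol in the immediate snapshot model running for some number $N$ of rounds. By the discussion in Section~\ref{sec:computation}, the protocol complex produced by $N$ rounds of full-information immediate snapshots starting from the input complex is precisely $\Ch^N(\cI)$, which is a chromatic subdivision of $\cI$. The decision map $\delta$ of $P$ sends each vertex of $\Ch^N(\cI)$ to a vertex of $\cO$ carrying the same process name, so $\delta$ is automatically color-preserving; correctness of $P$ with respect to the task says exactly that $\delta$ is carried by $\Gamma$. Taking $\Div(\cI) := \Ch^N(\cI)$ and $\mu := \delta$ furnishes the required map.

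For the backward direction, suppose we are given a chromatic subdivision $\Div(\cI)$ together with a color-preserving simplicial map $\mu: \Div(\cI) \to \cO$ carried by $\Gamma$. The goal is to build a wait-free protocol that solves the task, which, by the forward direction in reverse, amounts to exhibiting a chromatic simplicial map $\delta: \Ch^N(\cI) \to \cO$ carried by $\Gamma$ for some $N$: then an $N$-round immediate snapshot protocol with this decision map works. A naive attempt is to apply simplicial approximation (Theorem~\ref{theorem:ch03:approx}) to the continuous map induced by $\mu$ on $|\cI| = |\Div(\cI)|$, obtaining, for some $N$, a simplicial map $\Ch^N(\cI) \to \Div(\cI)$, and then composing with $\mu$. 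The obstruction is that simplicial approximation produces a map that is in general \emph{not} color-preserving, so the composed decision map would fail the chromatic requirement.

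This is exactly where the convergence algorithm is needed. The aim is to produce, after enlarging $N$ if necessary, a chromatic simplicial map $\gamma: \Ch^N(\cI) \to \Div(\cI)$ that is also carrier-preserving in the sense that $\gamma$ carries $\Ch^N(\sigma)$ into $\Div(\sigma)$ for every $\sigma \in \cI$; composing with $\mu$ then yields the desired $\delta = \mu \circ \gamma$, and carrier-preservation together with the fact that $\mu$ is carried by $\Gamma$ ensures that $\delta$ is carried by $\Gamma$ as well. Informally, each process executes $N$ rounds of immediate snapshots to land on some simplex $\tau$ of $\Ch^N(\cI)$, and then uses only its local view of $\tau$ to pick a vertex of $\Div(\carrier(\tau))$ of its own color, so that the vertices picked by the processes sharing $\tau$ assemble into a simplex of $\Div(\cI)$. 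The main obstacle, and the substance of the remaining sections, is to show that these local choices can be made (i) consistently across all simplexes, so that the result is actually a simplicial map, (ii) wait-free, so that each process decides solely on its own snapshot without waiting for any other process, and (iii) compatibly with the carrier structure. The combinatorial leverage for all three comes from the link-connectivity of subdivided simplexes, which provides enough room to extend boundary choices to interior choices without color clashes; turning this existential flexibility into an explicit asynchronous algorithm, and verifying its correctness, is the heart of what follows.
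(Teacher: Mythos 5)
Your proposal is correct and follows essentially the same route as the paper: the forward direction via the immediate-snapshot protocol complex $\Ch^N(\cI)$ with the decision map as $\mu$, and the backward direction by reducing to the chromatic simplex agreement task and invoking the convergence algorithm (with link-connectivity as the key enabling property) to obtain the chromatic, carrier-preserving map $\Ch^N(\cI)\to\Div(\cI)$ that is then composed with $\mu$. Both you and the paper defer the substantive work to the convergence algorithm's construction and correctness proof.
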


In one direction, the claim is relatively easy.
If there exists a read-write protocol,
then there exists an immediate snapshot protocol
whose protocol complex is $\Ch^N(\cI)$ for some $N > 0$.
The color-preserving simplicial decision map $\delta:~\Ch^N(\cI)~\to~\cO$ is the desired map.

The other direction is more difficult:
we must show that given a chromatic map
$\mu:~\Div \cI~\to~\cO$ carried by $\Div$, for some $N > 0$.
A protocol can be constructed by solving the convergence task on $\Ch^n(\cI)$,
and then using $\phi$ as the decision map.

The most straightforward strategy is to show there exists a color-preserving simplicial map 
\begin{equation*}
\phi: \Ch^N \cI \to \Div \cI,
\end{equation*}
for some $N > 0$, such that for all
$\sigma\in\cI$, $\phi(\Ch^N\sigma)\subseteq\Div\sigma$.
These maps compose as follows:
\begin{equation*}
\Ch^N \cI \stackrel{\phi}{\to} \Div \cI \stackrel{\mu}{\to} \cO.
\end{equation*}
These maps can used to construct a protocol.
From an input simplex $\sigma$, each process performs the following three steps:
\begin{enumerate}
\item[step 1.]
execute an $N$-layer immediate snapshot protocol,
halt on a~vertex $x$ of the simplicial complex $\Ch^N \sigma$, 

\item[step 2.]
compute $y = \phi(x)$, yielding a vertex in $\Div \sigma$, 

\item[step 3.]
compute $z = \mu(y)$, yielding an output vertex.
\end{enumerate}
It is easy to check that all processes halt on the vertexes of a single simplex
in $\Gamma(\sigma)$.
Moreover, because all maps are color-preserving,
each process halts on an output vertex of matching color.

In prior work,
the map $\phi:~\Ch^N \cI~\to~\Div \cI$ was constructed in the following combinatorial way.
The continuous identity map $|\Ch^N \cI| \to |\Div \cI|$
has a simplicial approximation $\psi:~\Ch^N~\cI~\to~\Div~\cI$,
carried by $\Delta$.
The bulk of the proof is concerned with ``perturbing'' $\psi$ to make it color-preserving,
a somewhat long and delicate construction.

Borowsky and Gafni suggested an alternative approach:
treat this problem as a task, $(\cI, \Div \cI, \Div)$,
in which processes start on vertexes of matching color on a simplex $\sigma$ of $\cI$,
and halt on vertexes of matching color on a single simplex of $\Div(\sigma)$.
An explicit protocol that solves this \emph{chromatic simplex agreement} (CSA) task
induces the desired map.
As noted, the original paper lacked a complete description of the algorithm and a proof,
both of which are presented here.

\subsection{Chromatic Simplex Agreement}

Borowsky and Gafni first consider a more simple task, called non-chromatic simplex agreement (NCSA), in which processes start on the same input complex $\I$ and must converge to any simplex of $\Div(\I)$. Processes do not have to land on vertexes of any specified color, though they must remain where they are if they run solo.

NCSA can be solved provided that $\I$ is sufficiently connected. Borowsky and Gafni present a sketch of an inductively constructed protocol, which can be made rigorous via combinatorial topology. While we do not give an explicit protocol here, in the next section we give an explicit protocol for a similar task called \emph{link-based NCSA}, or LNCSA, which is directly used in the construction of the convergence algorithm.

Algorithm~\ref{alg:csa} shows the Chromatic Simplex Agreement (CSA) protocol, which is round-based.
Recall that participating processes begin on a simplex of $\cI$.
In the first round, the processes to run a
simplex agreement protocol on $\Bary(\Div \cI)$,
so that they collectively choose a nested sequence of simplexes in
$\Div \cI$. Each process writes its simplex to shared memory and takes an immediate snapshot
of the simplexes written by the others.
If a process sees a vertex of its color in all simplexes in its snapshot,
then the process returns that vertex and finishes.
Otherwise, it computes its \emph{view}\footnote{
Borowsky and Gafni called the view the ``core''.}
as the union of all simplexes it saw, minus the vertex of its own color, if it exists. The process proceeds to the next round.

\vspace{10pt}

\begin{algorithm}[H]
\begin{mdframed}
\caption{The convergence algorithm}

\SetAlgoLined
\SetKw{shared}{shared}
\SetKwArray{participating}{participating}
\SetKwArray{views}{views}
\SetKwArray{simplexes}{simplexes}
\SetKwProg{protocol}{protocol}{:}{end}
\SetKwFunction{snapshot}{snapshot}
\SetKwFunction{sagree}{simplexAgree}
\SetKwFunction{lncsa}{linkNonchromaticSimplexAgree}
\SetKwBlock{immediate}{immediate}{}
\SetKwFor{uWhile}{while}{do}{}

\shared{\participating{n+1}}\;
\shared{\views{n+1}}\;
\shared{\simplexes{n+1, n+1}}\;
\protocol{chromaticSimplexAgree(p, v, $\cI$, $\Div$)}{
    \participating{p} := $p$\;
    $s$ := \sagree{v, $\cI$, $\Div$}\;
    \immediate{
        \simplexes{1, p} := ($s$, $\varnothing$)\;
        $snap$ := \snapshot{\simplexes{1}}\;
        \uIf{$\exists u : u \in \bigcap_{t \in snap} t$ \text{and} $\chi(u) = p$}{
            \Return{u}\;
        }
        \uElse{
            $toss$ := $\{u : u \in \bigcup_{t \in snap} \text{ and } \chi(u) = p\}$\;
            $w$ := $\bigcup_{t \in snap} t - toss$\;
        }
    }
    $r$ := 2\;
    \uWhile{True}{
        \immediate{
            \views{r, p} := $w$\;
            $snap$ := \snapshot{\views{r}}\;
            $P$ := \snapshot{\participating}\;
        }
        $c$ := $\bigcap_{x \in snap} x$\;
        $s, \overline{c}$ := \lncsa{$v$, $c$, $P$, $\cI$, $\Div$}\;
        \immediate{
            \simplexes{r, p} := ($s$, $\overline{c}$)\;
            $snap$ := \snapshot{\simplexes{r}}\;
        }
        \uIf{$\exists u : u \in \bigcap_{t \in snap} t \text{ and } \chi(u) = p$}{
            \Return{u}\;
        }
        \uElse{
            $toss$ := $\{u : u \in \bigcup_{t \in snap} \text{ and } \chi(u) = p\}$\;
            $w$ := $\bigcup_{t \in snap[0]} t \cup \bigcap_{d \in snap[1]} d - toss$\;
            $r$ := $r + 1$\;
        }
    }
}
\end{mdframed}
\label{alg:csa}
\end{algorithm}

\vspace{10pt}

A process entering round $r>1$ has already computed a view.
Each process begins round $r$ by writing its view to shared memory and
taking immediate snapshots of the shared memory from this round as
well as the first round.
By retaking a snapshot of the first round,
a process updates the \emph{participating set} of processes it
observes running the protocol.
By taking a snapshot of the views from the current round,
the process computes its \emph{core}\footnote{
Borowsky and Gafni called this the ``intersection of cores''.},
defined to be the intersection of the views from the current round's snapshot.
Using these two snapshots,
the process computes its \emph{convergence complex},
a simplicial neighborhood of the core (defined in the next section).
Note that processes typically construct different convergence complexes,
though all such complexes will be ordered by inclusion.
Each process then chooses a \emph{starting vertex} of its own color in
its convergence complex,
and runs the LNCSA protocol
on the barycentric subdivision of its convergence complex.
Processes collectively obtain a nested sequence simplexes, similar to the first round.
Each process then replaces its core with the smallest core observed when running LNCSA.
Each process writes its simplex and its new core to shared memory and takes a snapshot.
If a process sees a vertex of its color in each simplex it saw, it decides on that vertex and finishes.
Otherwise, it updates its view and proceeds to the next round.

See Algorithm \ref{alg:csa} for pseudocode. In the next section, we give more detailed explanations about how each piece of data is computed.

\section{Correctness of the Convergence Algorithm}
\label{sec:correctness}
\subsection{Preliminaries}
If each process executes Protocol~\ref{alg:csa},
it will decide on a vertex of its color in a simplex on $\Div(\cI)$,
thus solving CSA over $\cI$.
In this section we show that the protocol terminates, 
and that processes choose valid outputs.

Here is some notation useful to prove correctness of the convergence algorithm.
Let $p_1, \ldots, p_{n+1}$ be the processes participating in the
convergence algorithm.
Let $p$ be any  such process.
For each round $r > 1$, $p$'s state consists of the following:
its \emph{participating set} $P^r_p$,
its \emph{core} $c^r_p$,
its \emph{convergence complex} $\cC^r_p$,
its \emph{starting vertex} $v^r_p$,
its \emph{simplex} $s^r_p$,
and its view $w^r_p$.

Here is how each component is computed during round $r$ of the convergence algorithm: 
\begin{enumerate}
\item The \emph{participating set} $P^r_p$ is the set of processes and
corresponding input vertexes that process $p$ sees when it takes a
snapshot of the first round's shared memory. 

\item The \emph{core} $c^r_p$ is the set of vertexes that may
be decision values of processes other than $p$ that have finished
the protocol.
It is defined to be
\begin{equation*}
c^r_p = \bigcap_{j \in J} w^r_{p_j},
\end{equation*}
where $J$ is the index set of processes seen by $p$
in its snapshot of views from the current round $r$.
When running LNCSA, 
the process may see smaller cores,
in which case $p$ recomputes its core as
\begin{equation*}
\bar{c}^r_p = \bigcap_{k \in K} c^r_{p_k}
\end{equation*}
where $I$ is the index set of processes seen by $p$ during LNCSA. 

\item The \emph{convergence complex} $\cC^r_p$ is the complex on which
$p$ runs the LNCSA protocol to choose a new simplex consistent with
current decision values.
It is computed as 
\begin{equation*}
\cC^r_p = \Lk(\bigcap_{k \in K} c^r_{p_k}, \bigcup_{k \in K} P^r_{p_k}).
\end{equation*}

\item The \emph{starting vertex} $v^r_p$ is the vertex on which $p$
begins running the LNCSA protocol.
Any vertex from $\cC^r_p$ with the same color as $p$ may be chosen. 

\item The \emph{simplex} $s^r_p$ is the simplex in $\cC^r_p$ which $p$
chooses as a result of running the LNCSA protocol.
They collectively represent the domain of values from which processes may
choose decision values in round $r$. 

\item The \emph{view} $w^r_p$ is the set of vertexes that $p$
sees in round $r$. 
It is computed as
\begin{equation*}
w^r_p = \bigcup_{i \in I} s^r_{p_i} \cup \bigcap_{i \in I} \bar{c}^r_{p_i} - \{u^r_p\}
\end{equation*}
where $I$ is the index set of processes seen by $p$ in its snapshot of
simplexes and cores,
and $u^r_p$ is the vertex with the same color as $p$, if it exists.  
\end{enumerate}

\subsection{Link-based non-chromatic simplex agreement}
Before we can define LNCSA, we need a few technical lemmas.

\begin{lemma}
\label{lemma:simplexes}
All views and cores are simplexes.
\end{lemma}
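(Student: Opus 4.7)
The plan is to prove the lemma by induction on the round number $r$, and within each round to process the state variables in the order the algorithm actually computes them. The workhorse throughout is the standard \emph{immediate-snapshot chain property}: within any single immediate-snapshot step, the snapshots obtained by distinct processes are totally ordered by inclusion, so that an intersection of values indexed by such a snapshot equals the smallest value in the chain and a union equals the largest.

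Base case $r=1$: Every participating process runs simplex agreement on $\Bary(\Div\cI)$, so each chosen $s^1_{p_i}$ is a vertex of $\Bary(\Div\cI)$, i.e., a simplex of $\Div\cI$, and the outputs of the participants lie on a single simplex of $\Bary(\Div\cI)$, which is to say a chain in $\Div\cI$ under inclusion. When $p$ takes its round-$1$ snapshot of $\text{simplexes}[1]$, the chain property guarantees that the simplexes it sees form a chain, so their union is the maximum of that chain and is therefore a simplex of $\Div\cI$. Removing the at-most-one vertex of $p$'s own color yields a face, so $w^1_p$ is a simplex; the round-$1$ cores are empty, so the base case is complete.

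Inductive step $r>1$: Assume all views and cores from earlier rounds are simplexes. At the start of round $r$, $p$ writes its carried-over view (a simplex by the inductive hypothesis) and snapshots $\text{views}[r]$; by the chain property the observed views form a chain of simplexes, so $c^r_p=\bigcap_{j\in J}w^r_{p_j}$ is the smallest of these, hence a simplex. The same reasoning, applied to the immediate snapshot taken inside LNCSA against the $c^r_{p_k}$'s (already known to be simplexes by the step just completed), shows that $\bar c^r_p=\bigcap_{k\in K}c^r_{p_k}$ is a simplex. The LNCSA subroutine is a simplex-agreement–style protocol on the barycentric subdivision of $\cC^r_p=\Lk(\bar c^r_p,\cdot)$, so it returns a simplex $s^r_p$ of $\cC^r_p$, and the simplexes $s^r_{p_i}$ returned to participating processes form a chain. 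Hence in the round-$r$ snapshot of $\text{simplexes}[r]$, the $s^r_{p_i}$'s form a chain (union is a simplex) and the $\bar c^r_{p_i}$'s form a chain (intersection is a simplex). Because each $s^r_{p_i}$ lies in $\Lk(\bar c^r_{p_i},\cdot)$, the set $s^r_{p_i}\cup\bar c^r_{p_i}$ is itself a simplex; combining this with chain-monotonicity on the index $i$ shows $\bigcup_i s^r_{p_i}\cup\bigcap_i\bar c^r_{p_i}$ sits inside a single simplex of $\Div\cI$, and deleting the vertex of $p$'s color preserves this, giving $w^r_p$ a simplex.

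The main obstacle is the last step, where the view update fuses a union of LNCSA outputs with an intersection of recomputed cores. Making the result a simplex requires simultaneously using three ingredients: the chain structure produced by the simplex-agreement part of LNCSA, the chain structure produced by immediate snapshot, and the link relation $s^r_{p_i}\in\Lk(\bar c^r_{p_i},\cdot)$ tying each chosen simplex to its recomputed core. The delicate piece is verifying that monotonicity along the chain of indices propagates correctly when one side is unioned and the other is intersected; the remaining steps reduce essentially mechanically to the chain property and the inductive hypothesis.
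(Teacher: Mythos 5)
Your proposal is correct and follows essentially the same route as the paper's proof: induction on the round, with cores handled as intersections of views and the view update handled by taking the maximal element $s^r_{p_\ell}$ of the chain of LNCSA outputs, noting $\bigcap_i \bar{c}^r_{p_i} \subseteq \bar{c}^r_{p_\ell}$, and using the link relation to see that $s^r_{p_\ell} \cup \bar{c}^r_{p_\ell}$ is a simplex containing $w^r_p$, whence downward closure finishes. The only cosmetic difference is that you flag the union/intersection interaction as delicate, whereas it is immediate once $\ell$ is fixed, since the intersection of the recomputed cores is automatically contained in $\bar{c}^r_{p_\ell}$.
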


\begin{proof}
By induction.
It is clear that views from round $1$ are simplexes,
since they are all subsets of the largest simplex chosen during
simplex agreement.
The cores computed in round $2$ are also simplexes,
since they are intersections of the views from round $1$.

Inductively assume that all views from round $r$ are simplexes.
Clearly all cores $c^{r+1}_p$ in round $r+1$ are simplexes as well,
since they are intersections of views from round $r$.
Views in round $r+1$ are computed as
\begin{equation*}
w^{r+1}_p = \bigcup_{i \in I} s^{r+1}_{p_i}
\cup \bigcap_{i \in I} \bar{c}^{r+1}_{p_i} - \{u^{r+1}_p\}
\end{equation*}
Fix process $p$.
We know that
\begin{equation*}
\bigcup_{i \in I} s^{r+1}_{p_i} = s^{r+1}_{p_\ell}
\end{equation*}
for some $\ell \in I$, since the simplexes are
ordered by inclusion.
Then $s^{r+1}_{p_\ell}$ is the largest simplex seen by $p$ in its snapshot.
Furthermore, we know that
$s^{r+1}_{p_\ell} \cup \bar{c}^{r+1}_{p_\ell}$ is a simplex,
by definition of the link of a simplex.
But
\begin{equation*}
\bigcap_{i \in I}
\bar{c}^{r+1}_{p_i} \subseteq \bar{c}^{r+1}_{p_\ell}
\end{equation*}
so
\begin{equation*}
w^{r+1}_p = \bigcup_{i \in I} s^{r+1}_{p_i} \cup \bigcap_{i \in I}
\bar{c}^{r+1}_{p_i} - \{u^{r+1}_p\} \subseteq s^r_{p_\ell} \cup
\bar{c}^{r+1}_{p_\ell}.
\end{equation*}
By downward closure, we conclude that $w^{r+1}_p$ is a simplex. 

By induction, all views and cores are simplexes.
\end{proof}

The next lemma states that the convergence complexes of the processes
are ordered by inclusion,
ensuring that the NCSA tasks solved by different processes are coherent,
even though they may have different convergence complexes. 

\begin{lemma}
\label{lemma:links}
The convergence complexes of participating processes for any given round are ordered by inclusion. 
\end{lemma}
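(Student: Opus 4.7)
The idea is to reduce the statement entirely to the containment property of immediate snapshots together with the definition of the link. Write each convergence complex as $\cC^r_p = \Lk(\sigma_p, \Sigma_p)$ where $\sigma_p = \bigcap_{k \in K_p} c^r_{p_k}$ and $\Sigma_p = \bigcup_{k \in K_p} P^r_{p_k}$, with $K_p$ the index set of processes from whom $p$ observed a core during its run of LNCSA. The goal is to show that for any two participating processes $p,q$ in round $r$, either $\cC^r_p \subseteq \cC^r_q$ or the reverse.

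First I would establish that the index sets $\{K_p\}_p$ are totally ordered by inclusion. This is the only non-bookkeeping step: it follows from the immediate-snapshot containment property, which guarantees that whenever two processes snapshot a shared array the sets of entries they observe are linearly ordered. Because each $K_p$ is assembled exclusively out of such snapshots (the participating array in round $1$ and the cores written during LNCSA, which is itself built from immediate snapshots), the family $\{K_p\}_p$ inherits the same linear order. Without loss of generality assume $K_p \subseteq K_q$.

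Second, I would deduce the induced containments on $\sigma$ and $\Sigma$: since $K_p \subseteq K_q$, intersecting over a larger set gives $\sigma_q \subseteq \sigma_p$, while unioning over a larger set gives $\Sigma_p \subseteq \Sigma_q$. Moreover, the immediate-snapshot ordering of first-round participating sets implies each $\Sigma_p$ is itself a single simplex (the largest $P^r_{p_{k^*}}$ with $k^* \in K_p$), and these simplexes are nested.

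Third, I would verify $\cC^r_p \subseteq \cC^r_q$ by a direct link computation. Let $\tau \in \Lk(\sigma_p, \Sigma_p)$, so $\tau \cap \sigma_p = \emptyset$ and $\tau \cup \sigma_p \in \Sigma_p$. Then $\tau \cap \sigma_q = \emptyset$ since $\sigma_q \subseteq \sigma_p$, and $\tau \cup \sigma_q \subseteq \tau \cup \sigma_p \in \Sigma_p \subseteq \Sigma_q$, so by downward closure $\tau \cup \sigma_q \in \Sigma_q$. Hence $\tau \in \Lk(\sigma_q, \Sigma_q) = \cC^r_q$, completing the argument.

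The main obstacle is the first step: the convergence complex depends on data observed \emph{throughout} LNCSA rather than at a single snapshot taken at the start of the round, so the linear order on $K_p$ is not immediate from a single snapshot's containment. It has to be derived from the fact that LNCSA is itself implemented from immediate snapshots, and therefore the aggregate of observations each participant accumulates is still linearly ordered. Once this invariant is in place, the remainder of the argument is the short set-theoretic manipulation above.
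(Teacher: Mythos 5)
Your overall strategy coincides with the paper's: use the containment property of immediate snapshots to linearly order what the processes observe, note that intersecting over a larger index set shrinks the core while unioning over it enlarges the participating set, and conclude because $\Lk$ is order-reversing in its first argument and order-preserving in its second. Your third step is a correct, explicit verification of that last monotonicity fact, which the paper merely asserts, and your second step is fine once the first is in place.

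The gap is in your first step, and you have located it yourself but then papered over it. You take $K_p$ to be the set of processes observed \emph{throughout} LNCSA and assert that ``the aggregate of observations each participant accumulates is still linearly ordered'' because LNCSA is built from immediate snapshots. That principle is false in general: if $p$ and $q$ each take snapshots at two different layers (or of two different arrays), each individual pair of snapshots is comparable, but $p$ may observe strictly more at the first layer while $q$ observes strictly more at the second, so the unions of their observation sets are incomparable. Rescuing the aggregate claim would require an argument about carriers in the iterated immediate-snapshot complex, which you do not supply. Worse, the route is circular in this paper's logical order: Lemma~\ref{lemma:links} is invoked inside the proof of Lemma~\ref{lemma:LNCSA} precisely to show that the LNCSA carrier map is monotonic, so its proof cannot presuppose facts about an LNCSA execution. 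The paper sidesteps both problems by grounding the order in the single \texttt{immediate} block at the top of round $r$, where the view array and the participating array are snapshotted together: if $p_1$'s snapshot precedes $p_2$'s, then $p_2$ observes a superset in \emph{both} arrays simultaneously, giving $c_2 \subseteq c_1$ and $Q_1 \subseteq Q_2$ at once. With that substitution (a single atomic double-snapshot in place of your LNCSA-accumulated $K_p$), your steps two and three go through verbatim.
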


\begin{proof}
Fix round $r$.
If exactly one or fewer processes have not decided,
then the claim is trivial.
So let $p_1$ and $p_2$ be distinct processes that have not decided.
Let $\cC_i$ denote the convergence complex for $p_i$.
We must show $\cC_1$ and $\cC_2$ are comparable.
Without loss of generality,
suppose $p_1$ takes a snapshot of the view arrays for rounds $1$ and $k$ before $p_2$.
Then $p_2$ sees more views than $p_1$,
so it computes a larger intersection for its core,
meaning that $c_2 \subseteq c_1$,
where $c_i$ is the core of $p_i$ in the current round.
Furthermore, if $Q_i$ is the set of processes that $p_i$ sees in round $1$,
then $Q_1 \subseteq Q_2$.
Since the link operator is order reversing in the first argument,
and order preserving in the second,
we have $\Lk(c_1, Q_1) \subseteq \Lk(c_2, Q_2)$.
We conclude that $\cC_1 \subseteq \cC_2$, which proves the lemma.
\end{proof}
Since convergence complexes are ordered by inclusion,
cores and participating sets are ordered in the same way.

The next lemma allows each process to pick a vertex of its own color
in its convergence complex when it starts LNCSA. 

\begin{lemma}
\label{lemma:vertex}
If process $p$ has not decided by round $r$,
then there is at least one vertex of its color in $\cC^r_p$.
\end{lemma}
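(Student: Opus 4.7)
The plan is to locate a vertex of color $p$ in the link $\Lk(\bar{c}, \bigcup_{k\in K} P^r_{p_k})$ by a dimension count in the subdivided participating-set complex. The strategy has two stages: first, argue that $\bar{c} = \bigcap_{k\in K}c^r_{p_k}$ contains no vertex of color $p$; then, exploit the chromatic purity of the subdivision to fill in the missing color.

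For the first stage, the crucial observation is that an immediate snapshot always returns the caller's own write. Hence $p$ appears in the index set $K$ used to form $\bar{c}$, and $p$ appears in the index set $J$ used to form $c^r_p = \bigcap_{j\in J}w^r_{p_j}$. Therefore $\bar{c}\subseteq c^r_p\subseteq w^r_p$. The algorithm constructs $w^r_p$ by explicitly subtracting the toss set, which by definition contains every vertex of color $p$ in the union; so $w^r_p$ omits every color-$p$ vertex, and so does $\bar{c}$.

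For the second stage, let $\pi=\bigcup_{k\in K}P^r_{p_k}$ and work inside the host complex $\Div(\pi)$ in which the convergence complex lives. An induction parallel to Lemma~\ref{lemma:simplexes} shows that $\bar{c}$ is in fact a simplex of $\Div(\pi)$. Since $\Div(\pi)$ is pure of dimension $\dim\pi$, the simplex $\bar{c}$ extends to some top-dimensional simplex $\bar{\sigma}\in\Div(\pi)$. Because the subdivision is chromatic, each top-dimensional simplex of $\Div(\pi)$ carries each color of $\pi$ exactly once. Since $p$ has not decided, it has written its name into the participating array before round $r$, so its input vertex $v_p$ lies in $P^r_p\subseteq\pi$ and color $p$ is indeed a color of $\pi$. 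Let $v$ be the unique color-$p$ vertex of $\bar{\sigma}$; since $\bar{c}$ contains no color-$p$ vertex, $v\notin\bar{c}$, and $\bar{c}\cup\{v\}\subseteq\bar{\sigma}$ is a simplex of $\Div(\pi)$ by downward closure. Hence $v\in\cC^r_p$ with $\chi(v)=p$.

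The main obstacle will be the induction that certifies $\bar{c}\in\Div(\pi)$: one has to track simultaneously that views and cores are simplexes of the subdivision and that they live in the correct (monotonically growing) participating set, using the fact shown just after Lemma~\ref{lemma:links} that cores and participating sets are ordered by inclusion in the same way as the convergence complexes. The base case $r=2$ is slightly delicate because those cores come from the round-$1$ views produced by the initial simplex-agreement call on $\Bary(\Div\cI)$ rather than from an LNCSA call, so one must verify that the landing complex still matches $\Div(\pi)$ before the dimension argument can be applied.
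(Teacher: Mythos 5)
Your proof is correct and takes essentially the same approach as the paper's: first show that the core contains no vertex of color $p$ because it is contained in $p$'s own previously written view, from which every color-$p$ vertex was explicitly tossed, and then use chromaticity of the subdivision to produce a color-$p$ vertex in the link. The only difference is one of detail: your second stage (purity of $\Div(\pi)$, extension of the core to a facet carrying every color of $\pi$) carefully justifies what the paper compresses into the single sentence ``since $\cI$ is chromatic, $\cC^r_p$ must contain at least one vertex of color $p$,'' and you are right to flag that certifying $\bar{c}\in\Div(\pi)$ requires an induction the paper leaves implicit.
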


\begin{proof}
It suffices to show that no vertex in $c^r_p$ has the same color as $p$.
The core $c^r_p$ is computed as the intersection of views that
$p$ sees in its snapshot of the view array in round $r$.
This includes $w^{r-1}_p$.
But $w^{r-1}_p$ cannot contain a vertex of color $p$,
since any such vertex is explicitly removed in the computation of
$w^{r-1}_p$.
Since $w^{r-1}_p$ does not contain a vertex of color $p$,
neither does $c^r_p$.
since $\cI$ is chromatic,
$\cC^r_p$ must contain at least one vertex of color $p$. 
\end{proof}

The next lemma states that a decision value reached in a given round
is contained in all cores and views of subsequent rounds.

\begin{lemma}
\label{lemma:stable}

If a process decides on a vertex,
then that vertex is contained in the views and cores of all processes
in all subsequent rounds.

\end{lemma}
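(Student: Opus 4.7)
The plan is to proceed by induction on the round number, using the immediate-snapshot property (linear order by inclusion of concurrent snapshots) as the base case, and then propagating through the four inductive constructions (core, LNCSA-refined core, simplex, view) that build a round from the previous one.

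First I would fix the deciding process $p$, the round $r$ in which it decides, and the output vertex $u$ with $\chi(u) = p$. By the decision condition of the algorithm, $u$ lies in every simplex of $p$'s round-$r$ simplex snapshot. The base case of the induction establishes that $u \in w^r_q$ for every process $q \neq p$ that did not decide in round $r$. The key tool is the fact that in a single immediate-snapshot stanza the snapshots of any two participants are linearly ordered by inclusion. If $q$'s snapshot contains $p$'s, then every simplex seen by $p$ is also seen by $q$, and hence $u \in \bigcup_i s^r_{p_i}$; if $p$'s snapshot strictly contains $q$'s, then $q$'s own write $s^r_q$ is visible to $p$, so by the decision condition $u \in s^r_q \subseteq \bigcup_i s^r_{p_i}$. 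Either way $u$ lands in the union appearing in the view formula, and since $\chi(u) = p \neq q$ it is not stripped away by the subtraction of $\{u^r_q\}$.

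For the inductive step, I would assume that $u \in w^{r'-1}_{q'}$ for every process $q'$ still running at the end of round $r'-1$, and then trace $u$ through round $r'$ for an arbitrary still-running process $q$. When $q$ takes its snapshot of round-$r'$ views (actually a snapshot of round-$r'-1$ views read at the start of round $r'$), it sees only views written by processes that have not decided, so by the hypothesis every such view contains $u$, whence $u$ is in the intersection that defines $c^{r'}_q$. The same argument applied one level higher, using the LNCSA snapshot of round-$r'$ cores, yields $u \in \bar c^{r'}_q$. Finally, the view formula $w^{r'}_q = \bigcup_i s^{r'}_{p_i} \cup \bigcap_i \bar c^{r'}_{p_i} - \{u^{r'}_q\}$ contains $u$ because the intersection term does and $\chi(u) \neq q$. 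So the invariant is re-established at round $r'$.

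The main obstacle I expect is the bookkeeping at the transition out of the deciding round: a deciding process writes its simplex but does not write a view in that round, and stops writing altogether thereafter. One therefore has to argue explicitly that every view entering any later snapshot was written by a process that had not yet decided, so that the base case (or inductive hypothesis) applies to it; and that $q$'s own view, which is always in its snapshot, is itself nonempty and contains $u$, preventing the intersections from being vacuous. Once that is made precise, the rest is a mechanical unwinding of the formulas, and the claim for cores follows automatically from the claim for views one round earlier.
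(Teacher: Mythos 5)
Your proposal is correct and follows essentially the same route as the paper's proof: an induction anchored at the deciding round, with the base case resolved by the two-case immediate-snapshot ordering argument (either the other process sees $p$'s simplex, or $p$ saw the other's, and in both cases $u$ enters the union in the view formula), and the inductive step propagating $u$ through cores as intersections of views and back into views via the intersection term. Your explicit handling of the color subtraction and of decided processes dropping out of later snapshots is slightly more careful than the paper's, but it is the same argument.
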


\begin{proof}
Suppose process $p$ decides on vertex $u$ in round $r_p$.
We proceed by induction.
As the base case,
we show that every view computed in round $r_p$ contains $u$.
Let $p'$ be any other process that does not decide this round.
There are two cases:
when $p'$ computes its new view,
either $p'$ sees the simplex that $p$ wrote,
or $p'$ does not.
In the first case, $p'$ sees what $p$ wrote,
which is a simplex $\sigma$ that must contain $u$,
otherwise $p$ could not decide this round,
since $p$ must have seen $u$ in its own simplex.
So $p'$ includes $\sigma$ in the computation of its view,
so the view of $p'$ contains $u$.
Now consider the second case,
where $p'$ does not see the simplex that $p$ wrote.
But for $p$ to decide on $u$,
$u$ must have been contained in all previously written simplexes,
including the simplex of $p'$.
Since this simplex contains $u$, the view of $p'$ will also contain $u$.
In either case,
the view of $p'$ contains $u$. 

Now, inductively assume that all views from round $r \ge r_p$ contain vertex $u$.
We want to show that all views from round $r+1$ also contain $u$.
First, consider the cores computed in round $r+1$.
As intersections of views from round $r$,
all cores computed in round $r+1$ also contain $u$.
Furthermore, each process $p'$ computes its view in round $r+1$ as
\begin{equation*}
w^{r+1}_{p'} = \bigcup_{i \in I} s^{r+1}_{p_i} \cup \bigcap_{i \in I} \bar{c}^{r+1}_{p_i} - \{u^r_{p'}\}.
\end{equation*}
Since each core contains $u$,
we have
\begin{equation*}
u \in \bigcap_{i \in I} \bar{c}^{r+1}_{p_i} \subseteq w^{r+1}_{p'}
\end{equation*}
so  the view of $p'$ computed in round $r+1$ also contains $u$.
Therefore all views in round $r+1$ also contain $u$.
By induction, all views in all rounds after $r_p$ contain $u$.
Since cores are computed as intersections of views,
all cores in rounds $r > r_p$ also contain $u$. 
\end{proof}

We formally define LNCSA and give a protocol similar to that of NCSA.

\begin{lemma}
\label{lemma:LNCSA}
Fix round $r$, and suppose processes have computed their cores.
Then there is a wait-free immediate-snapshot protocol for converging
to a simplex on $\Lk(\bigcap_{k \in K} c^r_{p_k}, \bigcup_{k \in K} P^r_{p_k})$. 
\end{lemma}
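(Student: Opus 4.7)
The plan is to identify the convergence complex as a link inside a chromatically subdivided simplex, establish that it is sufficiently connected via the link-connectivity of subdivisions, and then invoke the Borowsky--Gafni NCSA protocol sketched earlier. The subroutine's extra output $\bar{c}$ amounts to a small bookkeeping addition on top of the NCSA execution.

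First I would verify the structural form of $\cC = \Lk(\bigcap_{k \in K} c^r_{p_k}, \bigcup_{k \in K} P^r_{p_k})$. The same argument as in Lemma~\ref{lemma:links}, applied to the first-round snapshots, shows that the participating sets $P^r_{p_k}$ are totally ordered by inclusion, so $P = \bigcup_{k \in K} P^r_{p_k}$ equals the largest of them and is itself a simplex of $\cI$. By Lemma~\ref{lemma:simplexes}, each core $c^r_{p_k}$ is a simplex of $\Div(P)$, and nesting forces the intersection $c = \bigcap_{k \in K} c^r_{p_k}$ to be a simplex of $\Div(P)$ as well. Hence $\cC = \Lk(c, \Div(P))$ is a link of a simplex inside a chromatic subdivision of a single simplex of $\cI$.

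Next I would invoke the topological input recorded in Section~\ref{sec:topology}: every chromatic subdivision of a simplex is pure and link-connected. Writing $n = \dim P$ and $k = \dim c$, link-connectivity of $\Div(P)$ gives that $\cC$ is $(n - k - 2)$-connected, which is precisely the connectivity hypothesis that the inductive Borowsky--Gafni NCSA protocol requires of its input complex in order to converge wait-free under immediate snapshots. Applying that protocol to $\cC$ yields a wait-free immediate-snapshot algorithm whose participating processes decide on vertices of a common simplex of $\cC$.

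Finally, I would define LNCSA as this NCSA instance augmented with one extra field: each process maintains, across its snapshots during the subroutine, the smallest core it has ever observed written by any other participant, and returns that simplex as $\bar{c}$ along with its chosen simplex $s$. The hard part will be checking that the connectivity bound is uniform in the position of $c$ inside $\Div(P)$, and that processes running LNCSA on \emph{different} convergence complexes (nested by Lemma~\ref{lemma:links}) still produce outputs on a single simplex of the largest $\cC$; the latter follows because a smaller $\cC$ sits as a subcomplex of every larger one, so any simplex it chooses is also a simplex of every containing convergence complex.
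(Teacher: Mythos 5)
There is a genuine gap. Your argument reduces everything to ``the Borowsky--Gafni NCSA protocol sketched earlier,'' but the paper deliberately does not supply that protocol---Section 4.1 says only that a sketch exists and that the rigorous construction is deferred to exactly this lemma. So the black box you invoke is the thing the lemma is supposed to prove. The paper's proof instead formalizes LNCSA as a task $(\cA, \cB, \Gamma)$ whose input vertices are triples $(v, c, P)$ and whose carrier map sends a simplex of inputs to the link determined by that simplex, builds a continuous carrier-preserving map $f : |\cA| \to |\cB|$ by induction on skeletons (using the connectivity of the links $\cL_i = \Lk(\overline{c}_i, \overline{P}_i)$ to extend $f^m|_{\partial\sigma_i}$ over each $(m+1)$-cell, then gluing with the pasting lemma), and finally invokes the simplicial approximation theorem to convert $f$ into a wait-free immediate-snapshot protocol. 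Your connectivity computation ($\cC$ is $(n-k-2)$-connected by link-connectivity of the subdivision) is the same ingredient the paper uses, but you use it only to satisfy the hypothesis of an unconstructed protocol rather than to actually build one.

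The second, related gap is the one you yourself flag as ``the hard part'': different processes run the subroutine on \emph{different} convergence complexes, each knowing only its own. Observing that a smaller $\cC$ is a subcomplex of every larger one (Lemma~\ref{lemma:links}) is necessary but not sufficient; it does not show that a process deciding based on the smaller complex lands on a face of the same simplex chosen by processes seeing the larger one. That compatibility is precisely the carrier-map condition $\gamma(\sigma) \subseteq \Gamma(\sigma)$ ranging over all faces $\sigma$ of an input simplex, and it is discharged in the paper by the inductive construction of $f$ (the base case pins solo executions to $f^0((v,c,P)) = v$, and each extension step is constrained to land in $|\cL_i|$). Without some equivalent mechanism, your proposal asserts rather than proves that the outputs across nested convergence complexes form a single simplex. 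The bookkeeping for $\overline{c}$ is indeed minor, as you say; the task formalization and the cell-by-cell extension are the real content you would need to add.
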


\begin{proof}
Let $(\cI, \Div(\cI), \Div)$ be the chromatic simplex agreement task
we want to solve via the convergence algorithm.
Fix round $r$, and suppose the processes have just computed their links.
They will next collectively converge to a simplex that is consistent
with the smallest core among them.
That is, they will converge on the largest link among the processes.  
This is the LNCSA task.

We formally define the task. Each process's state is defined by its core $c$,
its participating set $p$,
and the starting vertex $v$ it chooses in its link
(which by Lemma~\ref{lemma:vertex} it can do),
so the input complex $\cA$ of LNCSA has vertex set consisting of triples $(v, c, P)$
such that $v \in P$ and $\Car(c, \cI) \subseteq P$.
A set of triples $\{(v_i, c_i, P_i)\}$ is a simplex in $\cA$ if the
$v_i$ all have distinct colors and $\{c_i\}$ and $\{P_i\}$ are ordered by
inclusion.
We require the $v_i$ to have different colors since each process
chooses a vertex of its own color,
and we require $\{c_i\}$ and $\{P_i\}$ to be ordered in the same way
by inclusion,
since they were computed using snapshots. 

So $\cA$ is a subcomplex of
\begin{equation*}
\Delta(V) \times \Bary(\Div(\cI)) \times \Bary(\cI),
\end{equation*}
where $V$ is the vertex set of $\cI$.
The output complex of LNCSA is $\cB = \Bary(\Div(\cI))$,
since each process chooses a simplex on $\Div(\cI)$.
Processes that execute in isolation stay where they are.
Otherwise, processes converge to a simplex in the largest link they see,
so if $\sigma = \{(v_i, c_i, P_i)\}$ is a simplex in $\cA$,
then we have $\Gamma(\sigma) = \{v\}$
if $\sigma = \{(v, c, P)\}$, and
$\Gamma(\sigma) = \Lk(\bigcap c_i, \bigcup P_i)$.
By Lemma \ref{lemma:links},
$\Gamma$ is monotonic, so it is a carrier map. 

Thus LNCSA is the task $(\cA, \cB, \Gamma)$. We show that this task is solvable by constructing a continuous
\begin{equation*}
f : |\cA| \rightarrow |\cB|
\end{equation*}
carried by $\Gamma$.
We induct on skeletons to construct an such an $f$ by defining carrier-preserving
\begin{equation*}
f^m : |\skel^m(\cA)| \rightarrow |\cB|
\end{equation*}
for each $m$.
As base case, we define $f^0$ as $f^0((v, c, P)) = v$ for each vertex
$(v, c, P)$ of $\cA$. 
The function $f^0$ is clearly continuous and carried by $\Gamma$.
Now inductively assume we have defined $f^m$.
We want to define
\begin{equation*}
f^{m+1} : |\skel^{m+1}(\cA)| \rightarrow |\cB|.  
\end{equation*}
Let $\{\sigma_i\}$ be the facets of $\skel^{m+1}(\cA)$,
and for each $i$, let $\sigma_i = \{(v_{ij}, c_{ij}, P_{ij})\}_{j \le m+2}$.
Let
\begin{equation*}
\overline{c}_i = \bigcap_j c_{ij} \text{\quad and \quad} \overline{P}_i = \bigcup_j P_{ij}.
\end{equation*}
By the inductive hypothesis,
$f^m$ is carried by $\Gamma$,
which implies that the image of $f^m$ is contained in
\begin{equation*}
\cL_i = \Lk(\overline{c}_i, \overline{P}_i).  
\end{equation*}
We also know that $\dim(\cL_i) \ge \dim(\sigma_i) = m$,
since each $v_{ij} \in \cL_i$, each $v_{ij}$ has different color,
and $\cL_i$ is chromatic.
Suppose $\dim(\overline{P}_i) = n$.
We know that $\cL_i$ a link in the subdivided simplex $\Bary(\Div(\overline{P}_i))$,
which is link-connected,
so since $\dim(\cL_i) \ge m$,
we conclude that $\cL_i$ is at least $n - 2 - (n - m - 1)$ = $(m - 1)$-connected.
Using this, along with the fact that $\im(f^m) \subseteq \cL_i$,
we can extend $f^m|_{\partial \sigma_i}$ to a function $f^{m+1}_i : |\sigma_i| \rightarrow |\cL_i|$.
Next, using the pasting lemma,
we can glue the $f^{m+1}_i$ together to obtain a map
\begin{equation*}
f^{m+1} : |\skel^{m+1}(\cA)| \rightarrow \cB  
\end{equation*}
extending $f^m$.
By construction, $f^{m+1}$ is carried by $\Gamma$.
By induction, we have obtained a continuous function $f : |\cA| \rightarrow |\cB|$.
By the simplicial approximation theorem,
the task $(\cA, \cB, \Gamma)$ is wait-free solvable using immediate snapshots.
\end{proof}

\subsection{Termination and Validity}
Next, we show that all processes eventually decide.

\begin{theorem}
\label{theorem:liveness}
All processes participating in the convergence algorithm eventually decide.
\end{theorem}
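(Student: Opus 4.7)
My plan is to establish termination by showing that each iteration of the outer \texttt{while} loop in Algorithm~\ref{alg:csa} strictly reduces the number of undecided processes. Since at most $n+1$ processes ever participate, this gives termination in at most $n+1$ rounds. First, I would confirm that every round itself completes in finitely many steps: each round consists of a bounded number of immediate snapshots plus a single invocation of LNCSA, which terminates by Lemma~\ref{lemma:LNCSA}.

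The core obligation is the per-round progress claim: in any round $r$ that begins with $k \ge 1$ undecided processes, at least one of them decides during round $r$. I would prove this by examining the final immediate snapshot block of round $r$, in which each active process writes $(s^r_p,\bar{c}^r_p)$ into \texttt{simplexes}[r] and then reads the array. By the standard concurrency-class structure of immediate snapshots, the set of writers partitions into classes ordered by inclusion of snapshots, so there is a nonempty ``first'' class whose members see exactly the simplexes written by members of that class. Fix a process $p^*$ in this class. Its snapshot consists entirely of LNCSA outputs from round $r$, and by the agreement property of LNCSA---whose output complex is $\Bary(\Div(\cI))$, so all outputs are vertices of a common nested chain in $\Div(\cI)$---these simplexes are ordered by inclusion. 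Hence their intersection equals the smallest simplex $s^*$ in $p^*$'s snapshot, and the decision test reduces to whether $s^*$ contains a vertex of $p^*$'s color.

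The vertex-of-own-color step is where I expect the most care to be required. The strategy is to exploit the solo-termination behaviour of LNCSA: a process that runs LNCSA with no overlapping concurrent step outputs its starting vertex, which by Lemma~\ref{lemma:vertex} lies in $\cC^r_{p^*}$ and has $p^*$'s color. The goal is to select $p^*$ to be simultaneously first in LNCSA's internal snapshots and first in the round's final immediate snapshot; then $p^*$'s LNCSA output is the singleton $\{v^r_{p^*}\}$, this is the unique simplex in its final snapshot, and $v^r_{p^*} \in s^*$ satisfies the decision test. Once per-round progress is established, termination follows immediately by counting: at most $n+1$ processes are active initially, and each subsequent round eliminates at least one.

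The main obstacle is this cross-layer coordination, because the concurrency classes inside LNCSA and those of the outer round's final immediate snapshot are \emph{a priori} unrelated. The key technical step is to verify that in every execution one can exhibit such a ``doubly first'' process, using the fact that every immediate snapshot has a nonempty initial class and that LNCSA itself is built from nested immediate snapshots inheriting the same structural property. An alternative I would consider is a direct link-based dimension-counting argument in the spirit of the proof of Lemma~\ref{lemma:LNCSA}, which could show that the smallest LNCSA output seen by members of the first class must contain a vertex of matching color for at least one class member, without needing to coordinate the two snapshot layers explicitly.
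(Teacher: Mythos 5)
There is a genuine gap in your main line of argument: the ``doubly first'' process you need does not exist in general, because the adversary schedules both layers. Nothing forces any process to run LNCSA solo --- the scheduler can interleave all undecided processes through every internal immediate snapshot of LNCSA, in which case the solo clause $\Gamma(\{(v,c,P)\})=\{v\}$ applies to nobody, and every process's LNCSA output is merely some simplex in $\Lk(\bigcap c_i,\bigcup P_i)$ with no guarantee that it contains a vertex of its owner's color (LNCSA is deliberately non-chromatic; that is the whole reason the outer algorithm must iterate). Being first in the round's final snapshot of \texttt{simplexes}$[r]$ does not help either: that first process sees only LNCSA outputs, which may all lack a vertex of its color. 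So the decision test can fail for every member of the first concurrency class, and your counting argument never gets off the ground.

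The paper sidesteps this by not looking for a ``first'' process at all. Since all LNCSA outputs in a round lie on a single simplex of $\Bary(\cC)$ for the largest convergence complex $\cC$, they form a chain $\sigma_{i_0}\subseteq\cdots\subseteq\sigma_{i_\ell}$ with nonempty intersection $\sigma_{i_0}$. Pick any $\hat v\in\sigma_{i_0}$. Its color cannot belong to a nonparticipating process (everything lives in the participants' carrier) and cannot belong to an already-decided process (decided vertices lie in every core by Lemma~\ref{lemma:stable}, and the convergence complex is a link over the core, which excludes those colors). Hence $\chi(\hat v)$ is the color of some undecided participant $\hat p$, and $\hat v$ lies in \emph{every} written simplex, so $\hat p$ sees $\hat v$ in its entire snapshot --- wherever $\hat p$ falls in the snapshot order --- and decides. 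Your ``alternative'' paragraph gestures toward this kind of argument, but restricting attention to the first concurrency class is both unnecessary and insufficient; the essential step you are missing is identifying the deciding process a posteriori from the color of a vertex in the common intersection, using Lemma~\ref{lemma:stable} and the link/core structure to rule out decided and absent colors.
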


\begin{proof}
We show that at least one process decides each round.
Fix a round, and let $\{p_0, \ldots, p_\ell\}$ be the set of participating processes.
The processes each run LNCSA protocols over barycentric subdivisions
of the convergence complexes,
 and by Lemma \ref{lemma:LNCSA},
they converge to a simplex $\tau$ on the largest subcomplex,
say $\Bary(\cC)$.
By definition of $\Bary$, the simplex $\tau \in \Bary(\cA_{i_k})$
corresponds to a chain of simplexes $\sigma_{i_0} \subseteq \cdots \subseteq \sigma_{i_\ell}$.
Their intersection is $\sigma_{i_0}$, which is clearly nonempty,
so choose a vertex $\hat{v} \in \sigma_{i_0}$.
The color of $\hat{v}$, denoted as $\chi(\hat{v})$,
cannot be the color of any nonparticipating process,
because all subcomplexes $\Bary(\cC)$ are all subcomplexes of the
processes' carrier,
which contains only vertexes whose colors are those of the
participating processes,
since $\cI$ is chromatic.
Neither can $\chi(\hat{v})$ be the color of a process that
decided in a previous round.
Supposing not, this means the largest convergence complex contains
vertexes of color $\chi(\hat{v})$,
meaning that the corresponding core could not contain a vertex of
color $\chi(\hat{v})$,
since the link contains vertexes of color exactly those not in the
corresponding core.
This contradicts Lemma \ref{lemma:stable},
since decided vertexes must be contained in all cores. 

Let $\hat{p} \in P$ be the process whose color is $\chi(\hat{v})$,
and let $K \subseteq \{0, \ldots, \ell\}$ be the index set of
processes that $\hat{p}$ saw during its snapshot of the simplex array.
Then
\begin{equation*}
\bigcap_{k \le \ell} \sigma_{i_k} \subseteq \bigcap_{k \in K} \sigma_{i_k},
\end{equation*}
so $\hat{v} \in \bigcap_{k \in K} \sigma_{i_k}$.
Since $\hat{p}$ sees $\hat{v}$ in all simplexes in its snapshot,
$\hat{p}$ decides on $\hat{v}$.
It follows that at least one process decides in each round,
so all processes decide in at most $n+1$ rounds. 
\end{proof}

\begin{theorem}
\label{theorem:safety}

Participating processes converge to a simplex in their carrier.

\end{theorem}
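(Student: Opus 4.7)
The plan is to show that the set $U$ of all decision vertices (across all participating processes and all rounds in which they decide) forms a single chromatic simplex contained in $\Div(\sigma)$, where $\sigma\in\cI$ is the input simplex on which the participating processes start. Chromaticity is built into Algorithm~\ref{alg:csa}: a process only returns a vertex $u$ with $\chi(u)=p$. So the work is to verify (a) that decisions made within a single round sit inside a common simplex of $\Div(\sigma)$, and (b) that decisions made in \emph{different} rounds remain compatible, yielding a single simplex overall.

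For (a), I would fix a round $r$ and use Lemma~\ref{lemma:links} to linearly order the convergence complexes $\cC^r_p=\Lk(\bigcap c^r,\bigcup P^r)$. Let $\cC^r_*=\Lk(c^r_*,P^r_*)$ be the largest one (smallest core, largest participating set). By Lemma~\ref{lemma:LNCSA}, the LNCSA outputs $\{s^r_p\}$ chosen by the participating processes form a chain of simplexes inside $\cC^r_*$, so their union $S^r=\bigcup_p s^r_p$ is itself a simplex of $\cC^r_*$. By definition of the link, $S^r\cup c^r_*$ is a simplex of $\Div(P^r_*)\subseteq\Div(\sigma)$, the last containment holding because every participating set is a subsimplex of $\sigma$. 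Any vertex decided in round $r$ belongs to $\bigcap_{t\in snap}t\subseteq S^r$, so all round-$r$ decisions lie in the single simplex $S^r\cup c^r_*\in\Div(\sigma)$.

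For (b), I would invoke Lemma~\ref{lemma:stable}. If $p$ decides on $u_p$ in round $r_p$ and $q$ decides on $u_q$ in a strictly later round $r_q$, then stability guarantees that $u_p$ lies in every core computed in round $r_q$; in particular $u_p\in c^{r_q}_*$. On the other hand $u_q\in s^{r_q}_q\subseteq S^{r_q}\subseteq\cC^{r_q}_*$, i.e.\ $u_q$ lies in the link of $c^{r_q}_*$. Hence both $u_p$ and $u_q$ belong to the simplex $S^{r_q}\cup c^{r_q}_*$. Iterating this pairwise observation over all deciding processes shows that $U$ is contained in $S^{R}\cup c^{R}_*$, where $R$ is the last round in which a decision occurs, and this is a single simplex of $\Div(\sigma)$. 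Combined with chromaticity, $U$ is exactly a chromatic simplex of $\Gamma(\sigma)=\Div(\sigma)$.

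The main obstacle is the second bookkeeping step: one must check that Lemma~\ref{lemma:stable} really places earlier decision vertices into the \emph{core} $c^{r_q}_*$ actually used to build $\cC^{r_q}_*$, rather than only into some auxiliary view. This relies on the view-update rule $w^{r+1}_{p'}=\bigcup s^{r+1}_{p_i}\cup\bigcap\bar c^{r+1}_{p_i}-\{u^{r+1}_{p'}\}$ incorporating $\bigcap\bar c$ into views, and on the fact that the cores feeding $\cC^{r_q}_*$ are themselves intersections of views from rounds $\ge r_p$. Once this is verified, the link/core split $S^{r_q}\cup c^{r_q}_*$ absorbs both the past-stable vertex $u_p\in c^{r_q}_*$ and the fresh LNCSA output $u_q\in S^{r_q}$, closing the argument.
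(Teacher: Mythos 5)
Your proposal is correct and follows essentially the same route as the paper's proof: both arguments place each round's decisions inside the largest LNCSA simplex of the largest convergence complex, use the link/core union $\Sigma\cup c$ being a simplex, and invoke Lemma~\ref{lemma:stable} to absorb earlier-round decisions into the smallest current core, with carrier containment coming from the observed participating sets. Your (a)/(b) split is just a mild reorganization of the paper's induction on the accumulated decision simplex $\tau_r$.
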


\begin{proof}
We show that for each round $r$,
the set of vertexes that processes have decided form a simplex.
We argue by induction, starting at round $1$.
In round $1$,
it is clear that decision values form a simplex,
since processes decide on vertexes from the largest simplex chosen by running LNCSA.
All decision values in round $1$ must be contained in this largest simplex,
 hence they must form a simplex.

Fix round $r$, and inductively assume that the vertexes which processes
decided during the rounds up to $r$ form a simplex,
which we call $\tau_r$.
Consider round $r+1$.
Processes in this round choose vertexes on a simplex of the
barycentric subdivision of the largest convergence complex,
$\Bary(\cC)$, which is determined by the smallest core.
So vertexes on which processes decide this round are contained in an
ascending chain of simplexes in $\cC$.
Call the largest such simplex $\Sigma_{r+1}$.
Let $c_{r+1}$ be the smallest core,
corresponding to the largest convergence complex.
Then by definition of the link, $c_{r+1} \cup \Sigma_{r+1}$ is a simplex in $\cC$,
since the largest link is determined by simplex $c_r$.
From Lemma \ref{lemma:stable},
we must have $\tau_r \subseteq c_{r+1}$,
so $\tau_r \cup \Sigma_{r+1}$ is also a simplex in $\cC$,
by downward closure of simplicial complexes.
Let $\sigma_{r+1} \subseteq \Sigma_{r+1}$ be the set of all vertexes
that processes on decide during round $r+1$.
Again by downward closure,
\begin{equation*}
\tau_r \cup \sigma_{r+1} \subseteq \tau_r \cup \Sigma_{r+1}  ,
\end{equation*}
so $\tau_r \cup \sigma_{r+1}$ is a simplex.
But by definition of $\tau_r$,
$\tau_{r+1} = \tau_r \cup \sigma_{r+1}$,
where $\tau_{r+1}$ is exactly the set of vertexes on which processes have
decided up to round $r+1$.
So $\tau_{r+1}$ is also a simplex.
By induction,
it follows that the processes' decision values form a simplex. 

Processes can choose decision only values in their carrier,
since all links are computed relative to the observed participating set,
and all decision values are chosen from these links.
This completes the proof of the theorem,
and the proof of correctness of the convergence algorithm.
\end{proof}

\section{Application to more general tasks}

Recall that CSA over a chromatic subdivision $\Div(\K)$ is defined as
the task $(\K, \Div(\K), \Div)$.
The proof of the convergence algorithm shows that this task has a
wait-free read-write protocol.
To make the convergence algorithm work,
we required that $\Div(\K)$ be link-connected
so that processes can iteratively converge over ever smaller subcomplexes.
Phrased in a different way,
the convergence algorithm allows us to find a chromatic simplicial map
\begin{equation*}
\phi : \Ch^N(\cI) \rightarrow \Div \cI
\end{equation*}
carried by $\Div$,
Given the continuous (identity) map
$\mathrm{id} : |\cI| \rightarrow |\Div(\cI)|$ also carried by $\Div$.
In fact, link-connectivity is the essential property of the output complex.
In particular, the the convergence algorithm may be applied to more
general continuous functions $f : |\cI| \rightarrow |\cO|$ carried by some $\Gamma$.
Given the assumption that $\Gamma(\sigma)$ is link-connected for all
$\sigma \in \cI$,
we can use the convergence  algorithm to obtain a chromatic simplicial
map $\phi: \Ch^N(\cI) \rightarrow \cO$ also carried by $\Gamma$.
We state this observation as a theorem.  

\begin{theorem}
Let $f : |\cI| \rightarrow |\cO|$ be a continuous map between
chromatic complexes and let $\Gamma : \cI \rightarrow \cO$ be a
carrier map such that $\Gamma(\sigma)$ is link-connected for each
$\sigma \in \cI$.
Suppose $f$ is carried by $\Gamma$.
Then there exists a chromatic, carrier-preserving simplicial map
$\phi: \Ch^N(\cI) \rightarrow \cO$. 
\end{theorem}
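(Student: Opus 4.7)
The plan is to re-run the convergence algorithm of Section~\ref{sec:correctness}, using $\cO$ and $\Gamma$ in the roles that $\Div \cI$ and $\Div$ played for chromatic simplex agreement. The hypothesis that every $\Gamma(\sigma)$ is link-connected is exactly the property of standard chromatic subdivision that drove the proofs of Lemma~\ref{lemma:LNCSA} and Theorem~\ref{theorem:liveness}, so almost nothing beyond careful bookkeeping should be required.

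First I would invoke Theorem~\ref{theorem:ch03:approx} on the given $f: |\cI| \to |\cO|$ to produce a simplicial approximation $\psi: \Ch^N(\cI) \to \cO$ for some $N$, and observe that $\psi$ is automatically carried by $\Gamma$ because $f$ is. This exhibits a wait-free read-write decision map, but not necessarily a color-preserving one. Then I would re-run the convergence algorithm with $\Gamma$ substituted for $\Div$: processes start on vertexes of an input simplex $\sigma \in \cI$ and iteratively converge to a simplex in $\Gamma(\sigma)$, using convergence complexes $\cC^r_p = \Lk(\bigcap c^r_{p_k}, \bigcup P^r_{p_k})$ now sitting inside $\cO$. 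Lemmas~\ref{lemma:simplexes}, \ref{lemma:links}, \ref{lemma:vertex}, and \ref{lemma:stable} transfer without modification, since their proofs only manipulate the combinatorics of views, cores, and intersections, and never appeal to a specific feature of the subdivision.

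The essential step is to re-establish Lemma~\ref{lemma:LNCSA} in the generalized setting: the LNCSA task whose target is $\Lk(\bar c, \bar P) \subseteq \Gamma(\sigma)$ is still wait-free solvable. The skeleton-by-skeleton construction of a continuous $f': |\cA| \to |\cB|$ uses link-connectivity to extend across each new cell, and in the present setting that link-connectivity is provided directly by the hypothesis on $\Gamma(\sigma)$ rather than by that of a subdivision. The pasting lemma glues the partial maps, and simplicial approximation converts the assembled continuous function into a read-write protocol. Termination and validity (Theorems~\ref{theorem:liveness} and~\ref{theorem:safety}) then transfer, yielding a chromatic simplicial $\phi: \Ch^N(\cI) \to \cO$ carried by $\Gamma$.

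The main obstacle I foresee is verifying that links of the form $\Lk(c, P)$ arising as convergence complexes are connected to the precise degree needed for the inductive extension in the proof of Lemma~\ref{lemma:LNCSA}; one must trace through the dimension count $n - 2 - (n - m - 1) = m - 1$ and confirm that the hypothesis ``$\Gamma(\sigma)$ is link-connected for every $\sigma$'' actually bites when the relevant $c$ and $P$ come from the running algorithm rather than from a genuine chromatic subdivision. A secondary subtlety is that each process must still be able to find a starting vertex of its own color in its convergence complex (the analog of Lemma~\ref{lemma:vertex}); in the subdivided setting this was automatic, whereas for a general link-connected $\Gamma$ one has to argue it from the carrier hypothesis together with the chromatic structure of $\cO$.
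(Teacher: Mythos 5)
Your proposal matches the paper's intended argument exactly: the paper offers no formal proof at all, merely stating the theorem as an ``observation'' that the convergence algorithm goes through verbatim once link-connectivity of $\Gamma(\sigma)$ replaces link-connectivity of the subdivided simplexes in Lemma~\ref{lemma:LNCSA}. Your writeup is in fact more careful than the paper's, since you explicitly flag the two points (the connectivity count in the skeleton induction and the existence of a starting vertex of each process's color) that the paper leaves unexamined.
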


\section{Related work}
A more complete treatment of topological models for concurrent computing can be
found in the textbook of Herlihy, Kozlov, and Rajsbaum~\cite{HerlihyKR2013}.

The original ACT~\cite{HerlihyS93,HerlihyS99} used combinatorial arguments to
construct the color-preserving map required by the theorem.
Borowsky and Gafni~\cite{BorowskyG97} proposed the alternative algorithmic
approach to constructing this map,
but without complete definitions or a proof of correctness.
Guerraoui and Kuznetsov~\cite{GuerraouiK2004} compare the two approaches.
Gafni \emph{et al.}~\cite{GafniKM2014} recently generalized the ACT to
encompass infinite executions and other models.

The first application of the ACT was to prove the impossibility of the
\emph{$k$-set agreement} task~\cite{Chaudhuri90},
a result also proved, using other techniques,
by Borowsky and Gafni~\cite{BorowskyG93},
and by Saks and Zaharoglou~\cite{SaksZ93}.
These results generalize the classic proofs of the impossibility of consensus
due to Fischer \emph{et al.}~\cite{FischerL82} and Biran \emph{et
al.}~\cite{BiranMZ90}.

Castaneda and Rajsbaum~\cite{CastanedaR08} use the ACT to show that the
\emph{renaming} task~\cite{AttiyaBDKPR87} for $n+1$ processes
has no wait-free read/write protocol with $2n$ output names when $n+1$ is a
prime power, but that a protocol does exist when $n+1$ is not a prime power.
Attiya \emph{et al.}~\cite{AttiyaCHP2013} and Kozlov~\cite{Kozlov15a} give
upper bounds on the running times of such protocols.

\section{Remarks}
The heart of the ACT is the construction of a chromatic map from a
chromatic subdivision of the input complex to the output complex.
While it is easy to construct a non-chromatic map using
the well-known \emph{simplicial approximation theorem}~\cite[p.89]{Munkres84},
the only prior construction~\cite{HerlihyKR2013,HerlihyS99} was long and complex.
The algorithmic approach proposed by Borowsky and Gafni~\cite{BorowskyG93} was
intuitively appealing, but lacked a complete statement of the algorithm and a
proof.
Here, we have given such a statement and proof,
and we believe the result, while far from simple,
is easier to follow, and yields new insight into this key construction.

\vspace{10pt}

\newpage
\bibliographystyle{plain}
\bibliography{CA_submission}


\end{document}